\newtheorem{proposition}{Proposition}
\newtheorem{lemma}{Lemma}
\newtheorem{conjecture}{Conjecture}
\begin{document}

\title{A Split-Reduced Successive Cancellation List Decoder for Polar Codes}
\author{\IEEEauthorblockN{Zhaoyang~Zhang, \textit{Member, IEEE}, Liang~Zhang, Xianbin~Wang, \\
Caijun~Zhong, \textit{Senior Member, IEEE}, and H. Vincent~Poor, \textit{Fellow, IEEE}}
\thanks{This work was supported in part by the National Key Basic Research Program of China under Grant 2012CB316104, the National Hi-Tech R\&D Program of China under Grant 2014AA01A702, the National Natural Science Foundation of China under Grant 61371094, the Zhejiang Provincial Natural Science Foundation under Grant LR12F01002, the Huawei HIRP Flagship Project under Grants YB2015040053 and YB2013120029, and the U. S. National Science Foundation under Grants CCF-1420575 and ECCS-1343210.}\\
\thanks{Z. Zhang (email: {\tt ning\_ming@zju.edu.cn}), L. Zhang (email: {\tt 0705zhangliang@sina.com}), X. Wang (email: {\tt wangxianbin@outlook.com}), and C. Zhong (email: {\tt caijunzhong@zju.edu.cn}) are with the College of Information Science and Electronic Engineering, Zhejiang University, China. H. V. Poor (email: {\tt poor@princeton.edu}) is with the School of Engineering and Applied Science, Princeton University.}
}

\maketitle

\begin{abstract}
This paper focuses on low complexity successive cancellation list (SCL) decoding of polar codes. In particular, using the fact that splitting may be unnecessary when the reliability of decoding the unfrozen bit is sufficiently high, a novel splitting rule is proposed. Based on this rule, it is conjectured that, if the correct path survives at some stage, it tends to survive till termination without splitting with high probability. On the other hand, the incorrect paths are more likely to split at the following stages. Motivated by these observations, a simple counter that counts the successive number of stages without splitting is introduced for each decoding path to facilitate the identification of correct and incorrect path. Specifically, any path with counter value larger than a predefined threshold $\omega$ is deemed to be the correct path, which will survive at the decoding stage, while other paths with counter value smaller than the threshold will be pruned, thereby reducing the decoding complexity. Furthermore, it is proved that there exists a unique unfrozen bit $u_{N-K_1+1}$, after which the successive cancellation decoder achieves the same error performance as the maximum likelihood decoder if all the prior unfrozen bits are correctly decoded, which enables further complexity reduction. Simulation results demonstrate that the proposed low complexity SCL decoder attains performance similar to that of the conventional SCL decoder, while achieving substantial complexity reduction.
\end{abstract}

\begin{IEEEkeywords}
Polar codes, Gaussian approximation, split-reduced successive cancellation list decoder.
\end{IEEEkeywords}

\section{Introduction}
Polar codes, first discovered by Ar\*{\i}kan \cite{firstArikan}, are the first capacity-achieving codes for binary-input discrete memoryless channels with an explicit and deterministic structure. In addition, it was shown that a simple successive cancellation (SC) decoder asymptotically achieves the capacity with low complexity, of order $O(N\mathrm{log}N)$ where $N$ is the block-length \cite{firstArikan}. Due to these extraordinary properties, polar codes have captured the attention of both academia and industry alike.

Motivated by the fact that the SC decoder tends to exhibit less promising performance with finite-length block codes, an important line of current research is to seek efficient decoders with better performance for polar codes. In \cite{list_decoding_Vardy} and \cite{SCL_decoding_Chenkai}, the authors proposed the successive cancellation list (SCL) decoder, which was shown to approach the performance of maximum-likelihood (ML) decoding in the high signal-to-noise ratio (SNR) regime, albeit at the cost of higher processing complexity of $O(LN\mathrm{log}N)$, where $L$ is the list size. Later in \cite{Adaptive_SCL_Libin}, it was further demonstrated that polar codes concatenated with a high rate cyclic redundancy check (CRC) code outperform turbo and LDPC codes by applying an adaptive SCL decoder with sufficiently large list size. Trading storage complexity for computational reduction, the authors in \cite{Stack_decoding_Chenkai} and \cite{Sequential_decoding_Trifonov} proposed the successive cancellation stack (SCS) decoder, which was shown to have much lower computational complexity compared with the SCL decoder, especially in the high SNR regime, where its complexity becomes close to that of the SC decoder. More recently, a novel successive cancellation hybrid decoder was proposed in \cite{Improved_SCL_Chenkai}, which essentially combines the ideas of SCL and SCS decoders and provides a fine balance between the computational complexity and storage complexity.

As discussed above, the SCL decoder achieves superior performance compared to the SC decoder at the price of increased complexity, especially when the list size $L$ is very large, which has prohibited its widespread implementation in practice. As such, reducing the computational complexity of the SCL decoder is of considerable importance, motivating the current research. For the conventional SCL decoder, each decoding path will be split into two paths when decoding an unfrozen bit and the number of ``best paths'' remains at $L$ until the termination of decoding, which causes an increased complexity of $O(LN\mathrm{log}N)$. To reduce the decoding complexity, we argue that it is unnecessary to split all the decoding paths, supported by the key observation that splitting can be avoided if the reliability of deciding the unfrozen bit $u_i=0$ or $u_i=1$ is sufficiently high. A direct consequence of such a split-reduced approach is that many fewer paths are likely to survive after pruning, i.e., the number of ``best paths'' is much smaller than the list size, which results in further complexity reduction.

The main contributions of this paper are summarized as follows:
\begin{enumerate}
  \item Taking advantage of the fact that splitting is unnecessary if the unfrozen bit can be decoded with high reliability, a novel splitting rule is defined. Moreover, the behavior of the correct and incorrect decoding paths are characterized under the new splitting rule. Based on which, a split-reduced SCL decoder is proposed. By avoiding unnecessary path splitting as well as efficiently reducing the number of surviving paths, the proposed split-reduced SCL decoder can achieve significant reduction of complexity while retaining a similar error performance compared with the conventional SCL decoder.
  \item Furthermore, we prove the existence of a particular unfrozen bit $u_{N-K_1+1}$, after which the SC decoder achieves the same error performance as the ML decoder if all the prior unfrozen bits are correct, and show how to locate the particular unfrozen bit. Then, exploiting this crucial property, an enhanced version of the split-reduced SCL decoder is proposed.
\end{enumerate}

The rest of the paper is organized below. In Section \uppercase\expandafter{\romannumeral2}, we provide some basic concepts and notation for polar codes and the SCL decoder. In Section \uppercase\expandafter{\romannumeral3}, we present a novel split-reduced SCL decoder and provide an analysis of its decoding behavior. An enhanced version of the split-reduced SCL decoder is proposed in Section \uppercase\expandafter{\romannumeral4} while the simulation results are provided in Section \uppercase\expandafter{\romannumeral5}. Finally, Section \uppercase\expandafter{\romannumeral6} gives a brief summary of the paper.

\section{Preliminaries and Notations}
In this section, we provide a brief introduction to polar codes, the SC decoder and the SCL decoder, and explain the notation adopted in the paper.
\subsection{Polar Codes}
For a polar code with block-length $N=2^n$ and dimension $K$, the generator matrix can be written as $G_N=B_NG_2^{\otimes n}$, where $G_2=\left[
\begin{array}{ccc}
1 & 0 \\
1 & 1
\end{array}
\right]$,
$B_N$ is an $N\times N$ bit-reversal permutation matrix, and $(\cdot)^{\otimes n}$ denotes the $n$-th Kronecker power. We use $a_i^j$ to represent the sequence $(a_i,a_{i+1},...,a_j)$, and as such, any codeword of a polar code can be expressed as $c_1^N=u_1^NG_N$, where $u_1^N$ is the information sequence consisting of $N-K$ frozen bits and $K$ unfrozen bits.

Let $W: \mathcal{X}\rightarrow \mathcal{Y}$ denote a binary discrete memoryless channel with input alphabet $\mathcal{X} = \{0, 1\}$, output alphabet $\mathcal{Y}$, and channel transition probabilities $\{W(y|x):x\in \mathcal{X}, y\in \mathcal{Y}\}$. After channel polarization, the transition probability of the $i$-th subchannel is given by
\begin{equation}\nonumber
\begin{aligned}
W_N^{(i)}(y_1^N,u_1^{i-1}|u_i)=\sum_{u_{i+1}^N \in \mathcal{X}^{N-i}}\frac{1}{2^{N-1}}W_N(y_1^N|u_1^N),
\end{aligned}
\end{equation}
where
\begin{equation}\nonumber
\begin{aligned}
W_N(y_1^N|u_1^N)=\prod_{i=1}^NW(y_i|x_i).
\end{aligned}
\end{equation}

To implement the encoding, the $K$ most reliable subchannels are selected to transmit the unfrozen bits while the remaining subchannels are used for sending the frozen bits which are set to some fixed values (see \cite{how_to_construct_Vardy}). Without loss of generality, we assume that the frozen bits are zero valued.

\subsection{SC and SCL Decoding}
Define the logarithmic likelihood ratio (LLR) of $u_i$ as
\begin{equation}\nonumber
\begin{aligned}
L(u_i)=\mathrm{log}\frac{W_N^{(i)}(y_1^N,\hat{u}_1^{i-1}|u_i=0)}{W_N^{(i)}(y_1^N,\hat{u}_1^{i-1}|u_i=1)},
\end{aligned}
\end{equation}
where $\hat{u}_j$ and $y_1^N$ denote an estimate of $u_j$ and the received sequence from the channel, respectively. We use base-$e$ logarithms throughout this paper unless otherwise specified.

For standard SC decoding, bit-by-bit information decoding is performed. As such, if $u_i$ is an unfrozen bit, $\hat{u}_i$ is set to either $0$ or $1$ according to the sign of $L(u_i)$, i.e.,
\begin{eqnarray}
& \hat{u}_i=\left\{
 \begin{aligned}
    &0, \ \text{if} \ L(u_i) > 0,\\
    &1, \ \text{if} \ L(u_i) < 0.\\
 \end{aligned}
 \right.
 \label{LLR_for_SC}
\end{eqnarray}

Unlike the SC decoder which employs a hard-decision for each bit, the SCL decoder inspects both options for the estimate of any unfrozen bit $u_i$ and splits each decoding path into two paths. Nevertheless, at each decoding stage, only the best $L$ paths survive in order to reduce the complexity.

\section{A simple split-reduced SCL decoder}
This section presents a simple split-reduced SCL decoder. We start by first introducing a new splitting rule, and then examine the error performance under this rule. Based on this, a novel SCL decoding algorithm is proposed. Finally, a brief discussion of the complexity comparison between the proposed algorithm and the conventional SCL decoding algorithm is provided.

\subsection{The Splitting Rule}
As mentioned above, the proposed split-reduced SCL decoder exploits the fact that splitting is unnecessary if the reliability of decoding the unfrozen bit is high enough. Therefore, to implement such a decoder, the first step is to define the rule of splitting, i.e., how to decide whether the current decoding path shall split or not, and under what conditions. In the following, we first choose a metric to measure the decoding reliability, then define an appropriate threshold for this metric to establish the rule.

According to polarization, each unfrozen bit $u_i$ would observe a subchannel $W_N^{(i)}(y_1^N,u_1^{i-1}|u_i)$ and can be considered that $u_i$ is transmitted through such a synthetic channel. Thus, the reliability of decoding $u_i$ actually depends on $W_N^{(i)}$. Although for binary erasure channel (BEC), the reliability can be explicitly described by $Z(W)$ (see \cite{firstArikan}) and computed in a recursive manner, the same approach does not appear to be applicable to other channels including binary symmetric channel (BSC). Therefore, we adopt the \emph{a posteriori} probability as the metric of reliability for each subchannel, mainly inspired by \cite{Efficient_design_Trifonov}, where the Gaussian approximation was used to give an estimate for the error probability of $W_N^{(i)}$.

Having determined the measure of reliability, we now proceed to find an appropriate threshold. For subchannel $W_N^{(i)}$ and any given input $u_i$, suppose that all prior bits have been correctly decoded. Now let $P_e(u_i)$ denote the estimation error probability of $u_i$ averaged over all possible outputs $(y_1^N, u_1^{i-1})$, i.e.,
\begin{equation}\nonumber
\begin{aligned}
&P_e(u_i)=P(\hat{u}_i=u_i\oplus 1)\\
&=\sum_{u_1^{i-1} \in \mathcal{X}}\sum_{y_1^N \in \mathcal{Y}}P((1-2u_i)L(u_i)<0|\hat{u}_1^{i-1}=u_1^{i-1},u_i,y_1^N),
\end{aligned}
\end{equation}
then $P_e(u_i)$ in fact describes the probability that $u_i$ is incorrectly estimated in terms of the subchannel $W_N^{(i)}(y_1^N,u_1^{i-1}|u_i)$, given the correct prior bits $u_1^{i-1}$. Once each $P_e(u_i)$ is computed, one has obtained some `prior' knowledge which implies that if the correct path reaches stage $i$ (decode $u_i$), the probability that $u_i$ is correctly estimated should not be too smaller than $1-P_e(u_i)$. In other words, $1-P_e(u_i)$ can be regarded as the confidence level of decoding reliability of the $i$-th subchannel. Hence, it is a natural choice for threshold. In general, analytical evaluation of $P_e(u_i)$ is difficult. Nevertheless, it can be computed via Monte Carlo simulation or the method introduced in \cite{how_to_construct_Vardy}.

For the particular case of additive white Gaussian noise (AWGN) channels, $P_e(u_i)$ can also be evaluated by assuming that the LLR follows Gaussian distribution with mean $\mu$ and variance $\sigma^2 = 2|\mu|$ \cite{Efficient_design_Trifonov,polar_codes_over_AWGN_channel,Design_of_LDPC_Shokrollahi,Analysis_of_sum-product_Richardson}. While the Gaussian approximation assumption is used for analytical tractability, there are in fact theoretical supports to corroborate such assumption, as elaborated in the following.

Without loss of generality, assuming an all-zero codeword is transmitted over an AWGN channel with noise variance $\sigma_n^2$ using binary phase shift keying (BPSK), i.e., the codeword $c_1^N$ is mapped to signal $x_1^N$ by $x_i=1-2c_i$, it is easy to show that the LLR $L(y_i)$ of each received symbol $y_i$ follows the $\mathcal{N}(\frac{2}{\sigma_n^2},\frac{4}{\sigma_n^2})$ distribution.

Recall that equations (75) and (76) in \cite{firstArikan} can be rewritten from an LLR perspective as
\begin{equation}\nonumber
\begin{aligned}
&L_N^{(2i-1)}(y_1^N,u_1^{2i-2})\\
&=L_{N/2}^{(i)}(y_1^{N/2},u_{1,o}^{2i-2}\oplus u_{1,e}^{2i-2}) \boxplus
L_{N/2}^{(i)}(y_{N/2+1}^{N},u_{1,e}^{2i-2})
\end{aligned}
\end{equation}
and
\begin{equation}\nonumber
\begin{aligned}
&L_N^{(2i)}(y_1^N,u_1^{2i-1})\\&=L_{N/2}^{(i)}(y_1^{N/2},u_{1,o}^{2i-2}\oplus u_{1,e}^{2i-2})+
L_{N/2}^{(i)}(y_{N/2+1}^{N},u_{1,e}^{2i-2}),
\end{aligned}
\end{equation}
where $a \boxplus b=\mathrm{log}\frac{1+e^{a+b}}{e^a+e^b}$. Note that we have used $u_i$ instead of the estimate $\hat{u}_i$ since the real values of $u_1^{i-1}$ are provided when we compute $P_e(u_i)$ and the coefficient $(1-2u_{2i-1})$ in front of $L_{N/2}^{(i)}(y_1^{N/2},u_{1,o}^{2i-2}\oplus u_{1,e}^{2i-2})$ is omitted as well since all-zero codeword is transmitted. To simplify notations, we denote $f_1(a,b)=a \boxplus b$ and $f_2(a,b)=a+b$. It was demonstrated in \cite{Analysis_of_sum-product_Richardson} that if the symmetry condition, which can be expressed as $f(x)=f(-x)e^x$ with $f(x)$ being the density of an LLR message, is satisfied, the probability density function (pdf) of the output of the check node is approximately a Gaussian density which satisfies the symmetry condition as well. Therefore, if both $a$ and $b$ are Gaussian random variables that satisfy the symmetry condition, according to the result of \cite{Analysis_of_sum-product_Richardson}, $f_1(a,b)$ is approximately Gaussian distributed and satisfies the symmetry condition. Also, if $a$ and $b$ have the same pdf, i.e., $\mathcal{N}(m,2m)$, it is easy to check that $f_2(a,b)$ follows Gaussian distribution with $\mathcal{N}(2m,4m)$ and satisfies the symmetry condition as well.

Now let us take a look at the received LLR $L(y_i) \sim \mathcal{N}(m,2m)$ where $m=\frac{2}{\sigma_n^2}$, and it is easy to show that
\begin{equation}\nonumber
\begin{aligned}
\frac{1}{\sqrt{4\pi m}}e^{-\frac{(-y_i-m)^2}{4m}}e^{y_i}
&=\frac{1}{\sqrt{4\pi m}}e^{-\frac{(-y_i-m)^2-4m}{4m}}\\
&=\frac{1}{\sqrt{4\pi m}}e^{-\frac{(y_i-m)^2}{4m}},
\end{aligned}
\end{equation}
which indicates that the density of all the received LLR messages satisfy the symmetry condition. With some simple algebraic manipulations, it can be shown that each LLR $L(u_i)$ can be expressed as a compound function of $f_1$ and $f_2$ with $\{L(y_1),L(y_2),...,L(y_N)\}$ as the input. Since $L(y_i)$ has the same pdf $\mathcal{N}(m,2m)$, it is easy to verify that all the intermediate outcomes of $f_1(a,b)$ and $f_2(a,b)$ are approximately Gaussian distributed and satisfy the symmetry condition. Therefore, $L(u_i)$ can be approximated by the Gaussian distribution.

Now by expressing equations (75) and (76) in [1] in the form of expectation, we have \cite{Analysis_of_sum-product_Richardson}:
\begin{eqnarray}
\begin{aligned}\centering\label{ELu}
&\textbf{E}[L(u_1)]\\&=\phi^{-1}\Big(1-\big(1-\phi(\textbf{E}[L(y_1)])\big)\big(1-\phi(\textbf{E}[L(y_2)])\big)\Big),\\
&\textbf{E}[L(u_2)]\\&=\textbf{E}[L(y_1)]+\textbf{E}[L(y_2)],
\end{aligned}
\end{eqnarray}
where $\textbf{E}$ denotes expectation and
\begin{eqnarray}\nonumber
& \phi(x)=\left\{
 \begin{aligned}
    &1-\frac{1}{\sqrt{4 \pi |x|}}\int_{-\infty}^{\infty}\mathrm{tanh}\frac{u}{2}e^{-\frac{(u-x)^2}{4|x|}}du, \quad x\not= 0,\\
    &1, \quad  \quad \quad\quad\quad\quad\quad\quad\quad\quad\quad\quad\quad\quad\quad x=0.\\
 \end{aligned}
 \right.
\end{eqnarray}

As the likelihood ratios (LR) are recursively calculated by equations (75) and (76) in \cite{firstArikan}, the expectation of the LLRs, i.e., $\textbf{E}[L(u_i)]$, can be calculated in a similar manner. Then, based on the assumption that $L(u_i)$ satisfies the Gaussian distribution, the error probability of each subchannel $W_N^{(i)}(y_1^N,u_1^{i-1}|u_i)$ can be calculated by using the $Q$-function as
\begin{align}
P_e(u_i)=Q(\sqrt{\textbf{E}[L(u_i)]/2}),
\end{align}
where $Q(x)=\frac{1}{\sqrt{2\pi}}\int_x^{+\infty}e^{-\frac{t^2}{2}}dt$. Since $\textbf{E}[L(u_i)]$ depends on $\textbf{E}[L(y_i)]$ with $L(y_i) \sim \mathcal{N}(\frac{2}{\sigma_n^2},\frac{4}{\sigma_n^2})$, where $\sigma_n^2$ is the noise variance, it becomes clear that $P_e(u_i)$ is also SNR dependent. In addition, it is worth pointing out that, given $\sigma_n^2$, $P_e(u_i)$ can be calculated in an off-line manner.

Having defined both the measure of reliability and the threshold, the splitting rule is given as follows:
If either of the following two inequalities holds:
\begin{equation}
\begin{aligned}
P_{l}(u_i=0|y_1^N,\hat{u}_1^{i-1})>1-P_e(u_i),
\label{P_e(u_i0)}
\end{aligned}
\end{equation}
\begin{equation}
\begin{aligned}
P_{l}(u_i=1|y_1^N,\hat{u}_1^{i-1})>1-P_e(u_i),
\label{P_e(u_i1)}
\end{aligned}
\end{equation}
the $l$-th path does not split, otherwise, the $l$-th path splits into two paths. For instance, if Eq. (\ref{P_e(u_i0)}) holds, then we directly set $\hat{u}_i=0$ instead of splitting the $l$-th path.
According to Bayes' rule, a more convenient splitting rule can be found, as follows
\begin{eqnarray}
& \hat{u}_i=\left\{
 \begin{aligned}
    &0, \,\,\, L_l(u_i)>\mathrm{log}\frac{1-P_e(u_i)}{P_e(u_i)},\\
    &1,  \,\,\, L_l(u_i)<-\mathrm{log}\frac{1-P_e(u_i)}{P_e(u_i)},\\
    &\mbox{split}, \,\,\, \mbox{otherwise}.\\
 \end{aligned}
 \right.
 \label{L(u_i)threshold}
\end{eqnarray}
where $L_l(u_i)$ denotes the LLR of $u_i$ in the $l$-th decoding path and can be calculated in a recursive manner \cite{firstArikan}. For simplicity, we drop the subscript $l$ in the ensuing analysis.

\subsection{Key Observations}
We now investigate the implications of the newly defined splitting rule. As we mainly focus on AWGN channels, the Gaussian approximation method is adopted in the ensuing analytical derivation, i.e., all the propositions in this subsection are based on the assumption that the LLR $L(u_i)$ follows Gaussian distribution. For the purpose of clear exposition, we assume that an all-zero codeword is transmitted. Please note that, according to the following proposition, using the all-zero codeword does not cause any loss of generality of the ensuing analysis, since the distribution of $L(u_i)$ is symmetric for $u_i=0$ and $u_i=1$.

\begin{proposition}
Under the Gaussian approximation, i.e., $L(u_i) \thicksim \mathcal{N}(\textbf{E}[L(u_i)],2|\textbf{E}[L(u_i)]|)$, for any codeword $u_1^N$, if $\hat{u}_1^{i-1}=u_1^{i-1}$, then we have
\begin{center}
  $\textbf{E}[L(u_i)]=\left\{
 \begin{aligned}
    &\textbf{E}[L_0(u_i)], \ \text{if} \ u_i=0\\
    &-\textbf{E}[L_0(u_i)], \ \text{if} \ u_i=1\\
 \end{aligned}
 \right.,$
\end{center}
where $\textbf{E}[L_0(u_i)]$ denotes the mean of $L(u_i)$ for the all-zero codeword transmitted over an AWGN channel with noise variance $\sigma_n^2$.
\label{proposition_mean}
\end{proposition}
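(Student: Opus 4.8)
The plan is to prove the stronger, \emph{distributional} statement that $L(u_i) \stackrel{d}{=} (1-2u_i)\,L_0(u_i)$, where $L_0(u_i)$ is the LLR obtained for the all-zero codeword under the same noise statistics; taking expectations of this identity immediately yields the claimed mean relation, and under the Gaussian approximation (where the law of each LLR is pinned down by its mean) it also reproduces the stated variance $2|\textbf{E}[L(u_i)]|$. First I would reduce everything to the recursive LLR formulas (equations (75), (76) of \cite{firstArikan}) quoted above and argue by induction on $n=\log_2 N$. The induction tracks, for every synthetic channel, a single sign $s\in\{+1,-1\}$ such that the general-codeword LLR has the law of $s$ times the all-zero LLR.

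For the base case $N=1$ the synthetic channel is the raw AWGN channel and $c_1=u_1$. By the output symmetry of BPSK over AWGN, $y_i=(1-2c_i)+n_i$ gives $L(y_i)=(1-2c_i)m+\frac{2}{\sigma_n^2}n_i$ with $m=2/\sigma_n^2$, and since $n_i\stackrel{d}{=}-n_i$ we obtain $L(y_i)\stackrel{d}{=}(1-2c_i)L^{(0)}(y_i)$ exactly. Two elementary algebraic facts will drive the inductive step: the sign-multiplicativity of the check-node operation, $(-a)\boxplus b = a\boxplus(-b) = -(a\boxplus b)$ and $(-a)\boxplus(-b)=a\boxplus b$, which I would verify directly from $a\boxplus b=\log\frac{1+e^{a+b}}{e^a+e^b}$; and the bit identities $1-2(a\oplus b)=(1-2a)(1-2b)$ together with $(1-2u)^2=1$.

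For the inductive step I would read off from the quoted recursions that computing $L_N^{(2i-1)}$ and $L_N^{(2i)}$ invokes two independent length-$(N/2)$ sub-decoders: the left one (driven by $y_1^{N/2}$) decodes the XOR word $a_j=u_{2j-1}\oplus u_{2j}$, while the right one (driven by $y_{N/2+1}^N$) decodes the even word $b_j=u_{2j}$, both with correct priors inherited from $\hat u_1^{i-1}=u_1^{i-1}$. The induction hypothesis then assigns the left sub-LLR the sign $1-2a_i=(1-2u_{2i-1})(1-2u_{2i})$ and the right sub-LLR the sign $1-2u_{2i}$; because the two halves use disjoint, independent noise, these sign-flips hold jointly. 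Applying sign-multiplicativity of $\boxplus$ to $L_N^{(2i-1)}=L_{\text{left}}\boxplus L_{\text{right}}$ makes its sign $(1-2u_{2i-1})(1-2u_{2i})^2=1-2u_{2i-1}$, while in $L_N^{(2i)}=(1-2\hat u_{2i-1})L_{\text{left}}+L_{\text{right}}$ the g-node coefficient $1-2u_{2i-1}$ multiplies the left sign to $(1-2u_{2i-1})^2(1-2u_{2i})=1-2u_{2i}$, matching the right sign, so the sum carries the common sign $1-2u_{2i}$. In both cases the magnitude is exactly the all-zero value, which closes the induction.

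The main obstacle I anticipate is bookkeeping rather than analysis: correctly identifying the effective input bits $a_j$ and $b_j$ of the two sub-decoders from the prior fields $u_{1,o}^{2i-2}\oplus u_{1,e}^{2i-2}$ and $u_{1,e}^{2i-2}$, and then checking that the encoding-induced signs, the $\boxplus$ multiplicativity, and the g-node coefficient compose---via $(1-2u)^2=1$---to the single factor $1-2u_i$. I would emphasize that this sign relation is in fact exact for the true LLRs; the Gaussian approximation enters only to identify $|\textbf{E}[L(u_i)]|$ with $\textbf{E}[L_0(u_i)]$ and to conclude that flipping the sign of the mean flips the whole (symmetric) law, which is precisely what legitimizes the use of the all-zero codeword without loss of generality in the subsequent propositions.
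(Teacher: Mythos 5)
Your proof is correct, and it takes a genuinely different route from the paper's. The paper stays entirely inside the Gaussian-approximation mean-evolution calculus: it checks the sign flip on the $G_2$ kernel by direct case analysis (e.g., $(u_1,u_2)=(1,0)$ gives $\textbf{E}[L(y_1)]=-\frac{2}{\sigma_n^2}$, $\textbf{E}[L(y_2)]=\frac{2}{\sigma_n^2}$), uses the identity $\phi(x)+\phi(-x)=2$ (plus the implicit injectivity of $\phi$) to get $\textbf{E}[L(u_1)]=-\textbf{E}[L_0(u_1)]$ and the signed g-node sum to get $\textbf{E}[L(u_2)]=\textbf{E}[L_0(u_2)]$, and then appeals to induction over the recursive construction. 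You instead establish the exact distributional identity $L(u_i)\stackrel{d}{=}(1-2u_i)L_0(u_i)$ for the true LLRs: output symmetry of BPSK over AWGN at the base, and at the inductive step the sign-multiplicativity of $\boxplus$, the identity $1-2(a\oplus b)=(1-2a)(1-2b)$, the restored coefficient $(1-2u_{2i-1})$ at the g-node, and independence of the two half-blocks so that the marginal sign flips hold jointly; your identification of the sub-decoders' effective words ($a_j=u_{2j-1}\oplus u_{2j}$ on the left, $b_j=u_{2j}$ on the right, with correctly inherited priors) is exactly right, and the whole argument is essentially Ar{\i}kan's symmetry property of $W_N^{(i)}$ recast in LLR form. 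What each buys: the paper's argument is shorter and is phrased directly about the quantities the decoder actually computes via the $\phi$-recursion, whereas yours is strictly stronger --- it is exact, invokes the Gaussian assumption only to translate the sign flip of the law into the stated mean/variance parameterization, and it legitimizes the all-zero-codeword reduction at the level of full distributions (hence tail probabilities such as $P'_e(u_i)$ and $P'_r(u_i)$), which is what Proposition 2 and the conjectures subsequently rely on. One small point to make explicit: in the paper $\textbf{E}[L_0(u_i)]$ denotes the mean produced by the $\phi$-recursion rather than the true expectation, so you should add the one-line remark (which you already gesture at) that the exact sign symmetry of the laws forces any symmetric Gaussian model fitted to them to flip its mean parameter accordingly; with that, your conclusion and the paper's statement coincide.
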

\begin{proof}
See Appendix \ref{proofproposition_mean}.
\end{proof}

We start by examining the error performance of the SCL decoder with the newly defined splitting rule. The Gaussian distributed $L(u_i)$ is illustrated in Fig. \ref{Dis}. The two vertical lines correspond to two threshold values $\pm\mathrm{log}\frac{1-P_e(u_i)}{P_e(u_i)}$, cutting the entire range of $L(u_i)$ into three separate parts, i.e., $(-\infty, -\mathrm{log}\frac{1-P_e(u_i)}{P_e(u_i)})$, $[-\mathrm{log}\frac{1-P_e(u_i)}{P_e(u_i)},$ $\mathrm{log}\frac{1-P_e(u_i)}{P_e(u_i)}]$, and $(\mathrm{log}\frac{1-P_e(u_i)}{P_e(u_i)}, \infty)$. The first interval denotes the event in which no splitting is performed and $u_i$ is incorrectly decoded, i.e., $\hat{u}_i=1$. The probability of such event occurring can be computed as
\begin{equation}
\begin{aligned}
P'_e(u_i)&=\mathrm{Pr}(L(u_i)<-\mathrm{log}\frac{1-P_e(u_i)}{P_e(u_i)})\\
&=Q\big(\frac{\textbf{E}[L_0(u_i)]+\mathrm{log}(1-P_e(u_i))-\mathrm{log}(P_e(u_i))}{\sqrt{2\textbf{E}[L_0(u_i)]}}\big)\\
&=Q\big(\sqrt{\frac{\textbf{E}[L_0(u_i)]}{2}}+\frac{\mathrm{log}(1/Q(\sqrt{\frac{\textbf{E}[L_0(u_i)]}{2}})-1)}{\sqrt{2\textbf{E}[L_0(u_i)]}}\big)\\\label{perror}
&=Q\big(Q^{-1}(P_e(u_i))+\frac{\mathrm{log}(1/P_e(u_i)-1)}{2Q^{-1}(P_e(u_i))}\big).
\end{aligned}
\end{equation}
\begin{figure}[!ht]
\centering
\includegraphics[scale=0.45]{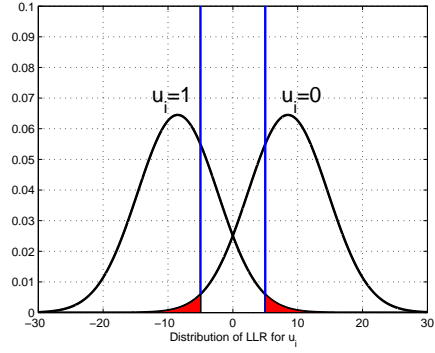}
\caption{Distribution of $L(u_i)$ under the Gaussian approximation.}
\label{Dis}
\end{figure}
Similarly, the last interval corresponds to the event in which no splitting is performed and $u_i$ is correctly decoded, i.e., $\hat{u}_i = 0$, and the probability associated with such event can be computed as
\begin{equation}\nonumber
P'_r(u_i)=\mathrm{Pr}\big(L(u_i)> \mathrm{log}\frac{1-P_e(u_i)}{P_e(u_i)}\big).\\
\end{equation}
Now, let $\mu$ and $\sigma$ be the mean and standard deviation of $L(u_i)$ when the all-zero codeword is transmitted, respectively, i.e., $\mu=\textbf{E}[L_0(u_i)]$ and $\sigma=\sqrt{2\textbf{E}[L_0(u_i)]}$. Then we have
\begin{equation}
P'_r(u_i)=Q\big(\frac{\mathrm{log}(1-P_e(u_i))-\mathrm{log}(P_e(u_i))-\mu}{\sigma}\big).\\\label{pcorrect}
\end{equation}

Since the proposed splitting rule becomes activated when the decoding reliability is high, i.e., $P_e(u_i)$ is small, it is of particular interest to see the error performance in this regime, and we have the following important results.

\begin{proposition}\label{lemmaP}
Under the Gaussian approximation, i.e., $L(u_i) \thicksim \mathcal{N}(\textbf{E}[L(u_i)],2|\textbf{E}[L(u_i)]|)$, we have 1) $\lim_{P_e(u_i)\to 0^{+}}\frac{P'_e(u_i)}{P_e(u_i)}=0$, i.e., $P'_e(u_i)=o(P_e(u_i))$, and 2) $\lim_{P_e(u_i)\to 0^{+}}P'_r(u_i)=1$.
\end{proposition}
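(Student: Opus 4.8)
The plan is to reduce both limits to statements about the standard Gaussian tail $Q$ through a single substitution. Set $x := Q^{-1}(P_e(u_i))$, so that $P_e(u_i)=Q(x)$ and, by the relation $P_e(u_i)=Q(\sqrt{\textbf{E}[L_0(u_i)]/2})$, we have $\mu=\textbf{E}[L_0(u_i)]=2x^2$ and $\sigma=\sqrt{2\textbf{E}[L_0(u_i)]}=2x$. The regime $P_e(u_i)\to 0^{+}$ corresponds exactly to $x\to+\infty$. Writing $\delta := \frac{\log(1/P_e(u_i)-1)}{2x}$ (which is positive once $P_e(u_i)<1/2$), the expression \eqref{perror} becomes simply $P'_e(u_i)=Q(x+\delta)$, while the argument of $Q$ in \eqref{pcorrect} collapses to $\delta-x$, using $\log((1-P_e(u_i))/P_e(u_i))=\log(1/P_e(u_i)-1)$. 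Thus part~1 asks for $Q(x+\delta)/Q(x)\to 0$ and part~2 for $Q(\delta-x)\to 1$ as $x\to\infty$.

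For part~1 I would avoid pinning down the precise growth of $\delta$ and instead use the exact identity $2x\delta=\log(1/P_e(u_i)-1)$, which gives $e^{-x\delta}=\sqrt{P_e(u_i)/(1-P_e(u_i))}$. Combining the standard two-sided Mills-ratio bounds $\frac{x}{(1+x^2)\sqrt{2\pi}}e^{-x^2/2}\le Q(x)\le \frac{1}{x\sqrt{2\pi}}e^{-x^2/2}$ (the upper bound applied to $Q(x+\delta)$, the lower bound to $Q(x)$) with $(x+\delta)^2-x^2=2x\delta+\delta^2$, the ratio is controlled by
\begin{equation}\nonumber
\frac{P'_e(u_i)}{P_e(u_i)}\le \frac{x^2+1}{x(x+\delta)}\,e^{-x\delta}\,e^{-\delta^2/2}\le \frac{x^2+1}{x^2}\sqrt{\frac{P_e(u_i)}{1-P_e(u_i)}}.
\end{equation}
Since the prefactor $\frac{x^2+1}{x^2}$ stays bounded and $e^{-\delta^2/2}\le 1$, the factor $\sqrt{P_e(u_i)/(1-P_e(u_i))}\to 0$ alone drives the limit to $0$, establishing $P'_e(u_i)=o(P_e(u_i))$.

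For part~2 it suffices to show $\delta-x\to-\infty$, since then $Q(\delta-x)\to Q(-\infty)=1$ by continuity of $Q$. Here I genuinely need the quantitative size of $\delta$: applying the lower Mills-ratio bound to $P_e(u_i)=Q(x)$ yields $\log(1/P_e(u_i))\le \tfrac{1}{2}x^2+\log\frac{(1+x^2)\sqrt{2\pi}}{x}=\tfrac{1}{2}x^2+O(\log x)$. Hence the numerator of $\delta-x=\frac{\log(1/P_e(u_i)-1)-2x^2}{2x}$ satisfies $\log(1/P_e(u_i)-1)-2x^2\le -\tfrac{3}{2}x^2+O(\log x)$, so $\delta-x\le -\tfrac{3}{4}x+O((\log x)/x)\to-\infty$, as required.

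The substitutions and the invocation of the Gaussian tail bounds are routine; the only real subtlety is that the two claims lean on the inverse $P_e$--$x$ relationship in opposite ways. Part~1 is driven purely by the exact identity $e^{-x\delta}=\sqrt{P_e(u_i)/(1-P_e(u_i))}$ and is insensitive to the exact magnitude of $\delta$, whereas part~2 hinges on the sharper estimate $\log(1/P_e(u_i))\le \tfrac12 x^2+O(\log x)$ to certify that the $-\mu=-2x^2$ term dominates $-\log P_e(u_i)$ in the $Q$-argument. I expect the main (though mild) technical obstacle to be verifying that the correction factor $\frac{x^2+1}{x(x+\delta)}$ remains bounded and that the logarithmic remainder in part~2 is negligible against $x^2$; both follow directly from $x\to\infty$ and $\delta>0$.
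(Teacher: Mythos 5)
Your proof is correct, but it takes a genuinely different route from the paper's. After the common substitution $t=Q^{-1}(P_e(u_i))$ (your $x$), the paper evaluates both limits asymptotically via repeated applications of L'H\^{o}pital's rule: for part 1 it reduces $Q(t+\delta)/Q(t)$ to a ratio of Gaussian densities and then shows separately that $\delta=\frac{\mathrm{log}(1/Q(t)-1)}{2t}\to+\infty$ and $\mathrm{log}(1/Q(t)-1)\to+\infty$; for part 2 it shows $\frac{\mathrm{log}(1/Q(t)-1)}{2t^2}\to 0$, so the argument of $Q$ in (\ref{pcorrect}) tends to $-\infty$. You instead use the two-sided Mills-ratio bounds to get non-asymptotic inequalities: the exact identity $e^{-x\delta}=\sqrt{P_e(u_i)/(1-P_e(u_i))}$ yields the explicit bound $P'_e(u_i)/P_e(u_i)\le \frac{x^2+1}{x^2}\sqrt{P_e(u_i)/(1-P_e(u_i))}$ --- in fact a rate, $P'_e(u_i)=O\big(P_e(u_i)^{3/2}\big)$, strictly stronger than the claimed $o(P_e(u_i))$ --- and the lower Mills bound gives $\delta-x\le-\frac{3}{4}x+O((\mathrm{log}\,x)/x)$ for part 2. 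Your route also buys rigor at one spot where the paper is slightly cavalier: its first L'H\^{o}pital step differentiates $Q(t+\delta(t))$ without the chain-rule factor $1+\delta'(t)$ (harmless, since $\delta'(t)\to 1/4$ is bounded, but unremarked), whereas your argument never differentiates $\delta$ at all; moreover, as you observe, your part 1 needs only $\delta>0$ plus the exact identity, not the growth of $\delta$. What the paper's approach buys in exchange is brevity: it needs no tail bounds, only standard limit manipulations. Both arguments correctly establish the proposition.
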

\begin{IEEEproof}
See Appendix \ref{prooflemmaP}.
\end{IEEEproof}

\begin{figure}[!ht]
\centering
\includegraphics[scale=0.42]{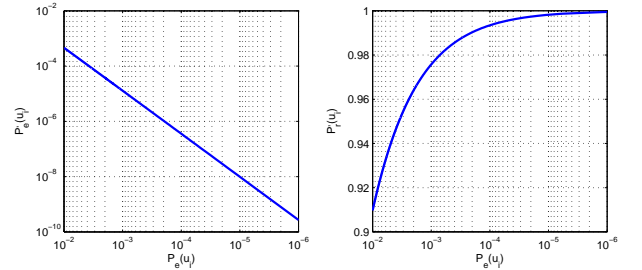}
\caption{$P'_e(u_i)$ (left) and $P'_r(u_i)$ (right) as functions of $P_e(u_i)$.}
\label{pe}
\end{figure}

The essential message of Proposition \ref{lemmaP} is that if the subchannel is sufficiently reliable, then with high probability, the correct path will not split and the unfrozen bit $u_i$ will be correctly decoded. As depicted in Fig. \ref{pe} (left), when the subchannel reliability improves, i.e., $P_e(u_i)$ becomes smaller, the decoding error $P'_e(u_i)$ decreases rapidly, and it is much smaller than $P_e(u_i)$. Similarly, Fig. \ref{pe} (right) shows that the probability of correct decoding $P'_r(u_i)$ approaches $1$ quickly when $P_e(u_i)$ becomes smaller, corroborating the claims of Proposition \ref{lemmaP}.

Armed with Proposition 2, we are ready to conjecture the behavior of the correct path in list decoding, which is in general hard to achieve a quantitative and explicit result since that number of error patterns increases exponentially and that the pruning operations involved in list decoding introduce very complicated coupling between paths.

\begin{conjecture}
Suppose that the correct decoding path survives until $u_1^{i-1}$. Under the Gaussian approximation, as $P_e(u_i)$ approaches zero, with high probability, the current path will survive at $u_{i}$ without splitting and $u_i$ will be correctly decoded. In addition, with the increasing reliability of the subsequent subchannels corresponding to $u_{i+1}^N$, the correct path will survive till termination without splitting with high probability.
\end{conjecture}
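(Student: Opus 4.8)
The plan is to split the conjecture into a single-step claim at stage $i$ and a multi-step claim over the remaining unfrozen positions, handling the former directly from Proposition \ref{lemmaP}. Conditioned on the correct path having survived until $u_1^{i-1}$, we have $\hat{u}_1^{i-1}=u_1^{i-1}$, so by Proposition \ref{proposition_mean} the mean of $L(u_i)$ on the correct path is $\mu=\textbf{E}[L_0(u_i)]$. The event ``survive at $u_i$ without splitting and decode correctly'' is exactly $\{L(u_i)>\mathrm{log}\frac{1-P_e(u_i)}{P_e(u_i)}\}$, whose probability is $P'_r(u_i)$ as computed in \eqref{pcorrect}. Since Proposition \ref{lemmaP} gives $\lim_{P_e(u_i)\to 0^+}P'_r(u_i)=1$, the first assertion follows immediately, and no list-specific reasoning is needed because a path that does not split is never subject to pruning.

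For the second assertion, let $\mathcal{A}$ be the set of unfrozen indices in $\{i,i+1,\dots,N\}$, and for each $j\in\mathcal{A}$ let $A_j$ be the event that the path does not split at stage $j$ and $\hat{u}_j=u_j$. Surviving to termination without splitting is precisely $\bigcap_{j\in\mathcal{A}}A_j$, since the frozen bits are decoded correctly by assumption. I would write the chain-rule factorization
\begin{equation}\nonumber
\mathrm{Pr}\Big(\bigcap_{j\in\mathcal{A}}A_j\Big)=\prod_{j\in\mathcal{A}}\mathrm{Pr}\big(A_j\,\big|\,{\textstyle\bigcap_{k<j,\,k\in\mathcal{A}}}A_k\big),
\end{equation}
and observe that on each conditioning event we have $\hat{u}_1^{j-1}=u_1^{j-1}$, so the $j$-th factor would equal the single-stage probability $P'_r(u_j)$ \emph{provided} the earlier no-split events do not alter the conditional law of $L(u_j)$ beyond fixing the prior bits. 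Granting this identification, the elementary inequality $\prod_j(1-\epsilon_j)\ge 1-\sum_j\epsilon_j$ with $\epsilon_j=1-P'_r(u_j)$ yields
\begin{equation}\nonumber
\mathrm{Pr}\Big(\bigcap_{j\in\mathcal{A}}A_j\Big)\ge 1-\sum_{j\in\mathcal{A}}\big(1-P'_r(u_j)\big),
\end{equation}
and each summand tends to $0$ by Proposition \ref{lemmaP}. Because the gap $1-P'_r(u_j)$ is a tail $Q$-function and decays rapidly in $\textbf{E}[L_0(u_j)]$, for subchannels whose reliabilities are sufficiently high the whole sum can be made arbitrarily small, giving survival to termination with probability approaching $1$.

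The main obstacle is exactly the italicized caveat above: the LLRs $\{L(u_j)\}_{j\in\mathcal{A}}$ are \emph{not} independent across stages. Each $L(u_j)$ is extracted from the same received vector $y_1^N$ through the recursive $f_1$ and $f_2$ operations, so conditioning on the earlier no-split events $A_k$ reshapes the conditional distribution of $L(u_j)$ in a way that is hard to quantify; the clean identification of the $j$-th conditional factor with $P'_r(u_j)$, and hence the product form $\prod_{j\in\mathcal{A}}P'_r(u_j)$, is only a heuristic. A fully rigorous argument would require either a correlation-decay estimate showing that conditioning on $\bigcap_{k<j}A_k$ perturbs $P'_r(u_j)$ negligibly, or a martingale/coupling argument tracking the joint trajectory of the surviving path; both appear intractable with the tools at hand, since the number of error patterns grows exponentially and the recursive structure resists decorrelation. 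For this reason I would, like the authors, state the multi-step behavior as a conjecture, while treating the single-step claim and the product-inequality skeleton above as the rigorous backbone on which it rests.
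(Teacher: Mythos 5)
The single-step half of your argument coincides with what the paper does: the appendix likewise dispatches it by citing Proposition \ref{lemmaP} (``the first two statements are straightforward results due to Proposition~2''). The genuine gap is your parenthetical claim that ``a path that does not split is never subject to pruning,'' which is false for Algorithm~1. Pruning in Step~3 is triggered whenever the \emph{total} number of paths exceeds $L$ --- i.e., whenever \emph{other} paths split --- and it then removes every path whose counter is below $\omega$ (if some path has reached $\omega$), or retains only the best $L$ paths by the metric (\ref{metric}) otherwise. So the correct path can be eliminated at stage $i$ even on your event $\{L(u_i)>\mathrm{log}\frac{1-P_e(u_i)}{P_e(u_i)}\}$: for instance, if it split recently so that its counter is small while an incorrect path has spuriously accumulated a counter above $\omega$, or if its metric falls outside the top $L$ when no counter has reached $\omega$. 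Consequently your chain-rule product $\prod_{j}\Pr\bigl(A_j\mid\bigcap_{k<j}A_k\bigr)$ bounds the probability of the no-split-and-correct events, but not of \emph{survival}; the inter-path coupling you would need to control is precisely the obstruction the paper names when explaining why the statement is only a conjecture.

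This is also where the paper's supporting argument (it offers only empirical evidence, not a proof) takes a genuinely different route for the multi-step part. Instead of multiplying per-stage tail probabilities, it tracks the SCL path metric through the recursion (\ref{metric}): on the correct path the multiplicative update is $e^{L(u_i)}/(e^{L(u_i)}+1)\approx 1-P_e(u_i)\approx 1$, so the correct path's \emph{a posteriori} probability hardly degrades from stage to stage, keeping it competitive in the best-$L$ selection --- i.e., addressing pruning, which your product skeleton omits entirely --- and induction on $i$ then gives survival to termination. Your flagging of the cross-stage dependence of the LLRs is apt, and the paper's induction silently suffers from the same unresolved dependence (hence ``conjecture''), but a complete heuristic in the paper's spirit needs the metric-degradation step, or something equivalent, in addition to your union bound: the product over no-split probabilities alone cannot deliver ``survives till termination'' under this pruning rule.
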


Some empirical evidences are provided in Appendix \ref{proofproposition_correct}.

Having characterized the behavior of the correct path, we now turn to examine the behavior
of the incorrect path, and we have the following conjecture:

\begin{conjecture}
Under the Gaussian approximation, for any incorrect path that survives at some unfrozen bit $u_i$, it will split at some stage within $\{i+1,i+2,...,N\}$ with high probability.
\end{conjecture}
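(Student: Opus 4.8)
The plan is to work under the all-zero transmission and the Gaussian approximation, and to show that a wrong prior bit forces the conditional mean of the later LLRs toward zero, so that each subsequent reliable stage splits with probability bounded away from $0$; accumulating these per-stage bounds over the remaining stages then drives the probability of never splitting to $0$.

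\emph{Step 1 (the cancellation mechanism).} First I would track how a decoding error propagates through the LLR recursions. Writing the variable-node recursion in its sign-explicit form $L_N^{(2t)}(y_1^N,\hat{u}_1^{2t-1})=(1-2\hat{u}_{2t-1})\,L_{N/2}^{(t)}(y_1^{N/2},\cdot)+L_{N/2}^{(t)}(y_{N/2+1}^{N},\cdot)$, note that any flipped bit $\hat{u}_{2t-1}=u_{2t-1}\oplus 1$ turns the addition into a subtraction of two component LLRs that, on the correct path, share the same positive mean. Under the Gaussian approximation this partial cancellation drives the conditional mean $\textbf{E}[L(u_k)]$ of every later bit $u_k$, $k>i$, toward $0$ (and possibly to the wrong sign), in sharp contrast to the large positive value $\textbf{E}[L_0(u_k)]$ seen by the correct path in Proposition \ref{proposition_mean}. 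I would make this quantitative by propagating the flipped sign through the $\phi$-based expectation recursion in (\ref{ELu}).

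\emph{Step 2 (per-stage split probability).} Recall from (\ref{L(u_i)threshold}) that a path splits at stage $k$ exactly when $|L(u_k)|\le T_k$ with $T_k=\log\frac{1-P_e(u_k)}{P_e(u_k)}$, and that $P_e(u_k)$, hence $T_k$, is the \emph{off-line, correct-path} reliability, which is large for a reliable subchannel. For the incorrect path, $L(u_k)\thicksim\mathcal{N}(\mu_k',2|\mu_k'|)$ with $|\mu_k'|$ small by Step 1, so I would lower-bound the single-stage split probability $p_k:=\mathrm{Pr}(|L(u_k)|\le T_k)$ by the Gaussian mass on $[-T_k,T_k]$, evaluated with the $Q$-function exactly as in (\ref{perror})--(\ref{pcorrect}) but with $\mu$ replaced by the degraded mean $\mu_k'$. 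Because the spread $\sqrt{2|\mu_k'|}$ is small while $T_k$ is large, this yields $p_k\ge p_*>0$ for the reliable subsequent stages.

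\emph{Step 3 (accumulation and the main obstacle).} It remains to combine the per-stage bounds. If the split events were independent, the probability of never splitting over $\{i+1,\dots,N\}$ would satisfy $\prod_{k=i+1}^{N}(1-p_k)\le(1-p_*)^{m}$, with $m$ the number of sufficiently reliable later subchannels, and this tends to $0$, proving the claim. The hard part is that these events are \emph{not} independent: the same wrong bits and the same outputs $y_1^N$ enter every $L(u_k)$, and the pruning step of list decoding further couples the survival of paths. I would address this with a sequential conditioning (stopping-time) argument: letting $\mathcal{F}_{k-1}$ denote the information generated by the decisions through stage $k-1$, I would show that on the event ``no split through stage $k-1$'' the conditional split probability at stage $k$ is still at least $p_*$, so that the never-split probability remains dominated by $(1-p_*)^{m}$. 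Proving this conditional lower bound is the crux, since it requires controlling the \emph{joint} law of the GA-approximated LLRs rather than their marginals---one must rule out that a long run of confident non-split decisions could, for an incorrect path, keep every remaining LLR well-aligned with the channel. This coupling is precisely why the statement is posed as a conjecture rather than a theorem.
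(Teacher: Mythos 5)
Your Step 1 reproduces the core mechanism of the paper's own evidence for this conjecture (Appendix \ref{proofproposition_incorrect}), but with one concrete overclaim. The paper tracks the wrong decision through the tree-based SC decoder: if the re-encoded error pattern $\beta_{v_1}$ has Hamming weight $m$, then $\textbf{E}[\alpha_{v_2}[i]]=(1-2\beta_{v_1}[i])\textbf{E}[\alpha_{v}[2i]]+\textbf{E}[\alpha_v[2i+1]]$ vanishes \emph{exactly} at those $m$ coordinates (since all means in a vector are equal), and the $\phi$-based recursion (\ref{ELu}) \emph{preserves the number of zero-mean LLRs}, so by induction at least $m$ leaf LLRs end up with mean exactly zero while all other leaves keep their undegraded means. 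Your claim that the conditional mean of \emph{every} later bit $u_k$, $k>i$, is driven toward zero is therefore too strong: the cancellation hits only the coordinates where the re-encoded error has a one, and it is the counting invariant, not a blanket degradation, that the paper propagates. This matters for your Step 3: the per-stage split bound $p_k\ge p_*$ holds only on the degraded subset, so your product should run over the (at least $m$) zero-mean positions that happen to be \emph{unfrozen}, not over ``the number of sufficiently reliable later subchannels''; if the degraded positions all landed on frozen bits no split would occur there at all, a case that both you and the paper leave unaddressed (the paper only remarks that $|\mathrm{log}\frac{1-P_e(u_i)}{P_e(u_i)}|$ is far from zero when $u_i$ is unfrozen).

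A second point cuts in your favor but also makes your Step 3 machinery unnecessary under the model: in the Gaussian approximation the variance is slaved to the mean, $\sigma^2=2|\mu_k'|$, so at a degraded stage the LLR concentrates near the small $\mu_k'$ (degenerating to a point mass at $0$ when $\mu_k'=0$), and the split probability at a \emph{single} such unfrozen stage is already close to $1$ --- no independence assumption, product bound, or stopping-time conditioning over many stages is needed to make the heuristic case. The paper accordingly stops exactly there: at least $m$ zero-mean leaves, thresholds far from zero, hence splitting is likely; it offers this only as empirical evidence and, like you, does not control the joint law of the LLRs or the coupling introduced by pruning. So your proposal follows essentially the same route as the paper's, your Step 3 correctly names the obstruction that explains why the statement is posed as a conjecture rather than a theorem, and the repairs needed are local: replace the ``all later bits'' claim with the zero-mean counting invariant, and restrict the accumulation to degraded unfrozen stages.
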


Some empirical evidences are provided in Appendix \ref{proofproposition_incorrect}.

It was observed in \cite{Low-complexity_improved_SCL_Orion_Afisiadis} that by inverting the first erroneous bit decision, the performance of the SC decoder can be significantly improved, which implies that the decoding error occurring in $\hat{u}_1^{i-1}$ will elevate the estimate error of $u_i^{N}$ due to severe error propagation. This observation indeed provides concrete support for Conjecture 2. Now, exploiting these desirable features presented in Conjecture 1 and Conjecture 2, in combination with the proposed novel splitting rule, a low-complexity decoding procedure can be devised as detailed in the following subsection.

\subsection{The Decoding Algorithm}
The above arguments imply that all the decoding paths can be classified into two different types according to their splitting behaviors: \emph{Type a)} surviving with almost no splitting, and \emph{Type b)} splitting frequently. Ideally, a \emph{Type a)} path is unique, which corresponds to the correct codeword, while all other paths are supposed to belong to \emph{Type b)}. To reduce the decoding complexity, the key thing is to reduce the number of surviving paths at each stage. We first introduce a counter $\omega_l[i]$ for the $l$-th path at stage $i$ (corresponding to $u_i$), which counts the number of stages that the $l$-th path survives without splitting. For the $l$-th path, if it proceeds to $u_i$ without splitting, then $\omega_l[i]=\omega_l[i-1]+1$. While if the $l$-th path splits into two paths $l'$ and $l''$, then $\omega_{l'}[i]=\omega_{l''}[i]=0$. Now, utilizing the fact that the correct path seldom splits, while the incorrect path tends to split at a certain stage, we argue that if $\omega_l[i]$ exceeds a predefined threshold $\omega$, then the $l$-th path is more likely to be the correct path. As such, other remaining paths with corresponding counter values less than $\omega$ can be pruned, thereby reducing the number of surviving paths.

Under the above rationale, we propose the following split-reduced SCL Algorithm:

\begin{algorithm}
\caption{: Split-Reduced SCL Decoder}
\label{alg:A}
\begin{algorithmic}
\STATE{
\begin{enumerate}
\item[Step 1] The initialization is done by starting from the first bit $u_1$;
\item[Step 2] For the $l$-th path and unfrozen bit $u_i$, if (\ref{L(u_i)threshold}) holds, then set $\hat{u}_i$ to be $0$ or $1$ without splitting the decoding path; otherwise, split the decoding path into two paths. $\omega_l[i]$ is updated for each path in the meantime;
\item[Step 3] When the number of paths exceeds the specified list size $L$, prune those paths whose counter is less than the predetermined constant $\omega$; if no path has counter larger than $\omega$, then select the best $L$ paths according to (\ref{metric});
\item[Step 4] If $i<N$, then increase $i$ to $i=i+1$ and go to Step 2; otherwise, the candidate codeword with the smallest distance from $y_1^N$ is selected as the decoding output.
\end{enumerate}
}
\end{algorithmic}
\end{algorithm}

\subsection{Complexity and Performance Analysis}
In terms of complexity, the split-reduced SCL decoder outperforms the SCL decoder in two aspects.
\begin{enumerate}
\item Recall that for the conventional SCL decoder, the number of decoding paths doubles after each unfrozen bit is processed. Thus, the number of paths grows to the specified list size $L$ after $\mathrm{log}_2L$ unfrozen bits are processed. After which, at each stage, $2L$ paths will be pruned to obtain the surviving $L$ paths. For the split-reduced SCL decoder, as the splitting is avoided when the reliability of the subchannel is high enough, the speed of reaching the specified list size $L$ is relatively slower. In addition, $(1+\theta)L$ paths ($0 \le \theta \le 1$) are pruned on average at each stage.
\item For the conventional SCL decoder, the number of surviving paths remains fixed at $L$ after the initial $\mathrm{log}_2L$ unfrozen bits are processed. For the split-reduced SCL decoder, if the counter value of some path $l$ exceeds the predefined threshold $\omega$, the number of surviving paths can be smaller than $L$. In the extreme case, only the correct path survives while all other paths are pruned.
\end{enumerate}

It is worth pointing out that, the choice of the threshold $\omega$ affects both the decoding complexity and the error performance. In particular, it specifies the number of stages allowed at the decoder to identify the correct path. Due to the error introduced by the underlying channel, the correct path might need several stages to accumulate its reliability. During this period, there may exist incorrect paths which appears to be more reliable and are mistaken as the `correct' path by the decoder. Therefore, if $\omega$ is small, it is more likely that there will be incorrect paths exceeding the LLR threshold and does not split while the correct one keeps splitting, which leads to an irreversible loss in performance if the real correct path is eliminated. As $\omega$ increases, the correct path is given more time to accumulate its reliability, hence has a higher chance to win over the incorrect paths, thereby achieving a better performance.

Regarding the complexity, note that before any path arrives at the $\omega$ threshold, all candidate paths keep splitting with a high probability since at most one of them is the correct one. Therefore, the decoder has to wait for at least $\omega$ stages until some path achieves the threshold $\omega$. Within this period, the number of paths remains similar to that of SCL decoder. In this regard, it is desirable to have a smaller $\omega$ in terms of complexity savings.
\section{An enhanced split-reduced SCL decoder}
This section presents an enhanced split-reduced SCL decoder, by exploiting Conjecture 1, that the correct decoding path tends to survive till termination without splitting after some unfrozen bit $u_{i+\omega}$, which suggests the key idea of replacing the SCL decoder with the SC decoder after this particular unfrozen bit. Nevertheless, it is in general quite difficult to determine the exact index $i+\omega$ because the distribution of frozen and unfrozen bits is highly dependent on the underlying channel and no simple rules can be derived explicitly. However, it turns out that an upper bound, denoted by $N-K_1+1$, can be found for the index $i+\omega$, after which, splitting is completely avoided for the subsequent unfrozen bits.

\subsection{An illustrative example to determine $K_1$}
We now provide a simple polar code with block-length $N=8$ and rate $R=0.5$ as an example to illustrate how to find $N-K_1+1$. We start by constructing a full binary tree with $N=8$ leaf nodes (as mentioned in \cite{Simplified_SCL_Alamdar_Yazdi}), as shown in Fig. \ref{concrete_ideaGE}.
\begin{figure}[h]
\centering
\includegraphics[scale=0.8]{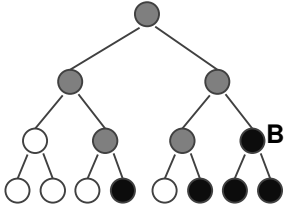}
\caption{A simple $(8,4)$ polar code with $K_1=2$.}
\label{concrete_ideaGE}
\end{figure}
Each leaf node corresponds to either a frozen bit or an unfrozen bit with an index in $\{1,2,...,N\}$ counted from left to right, and a frozen bit is denoted by a white disk while the other leaf nodes are denoted by black ones. In this particular example, $\{u_1,u_2,u_3,u_5\}$ are frozen bits while $\{u_4,u_6,u_7,u_8\}$ are unfrozen bits. Then, for a non-leaf node, if its two descendants have the same color it will also be colored the same, otherwise it is colored gray. The coloring process starts from the bottom leaf nodes until the root node is reached. After that, we start from the root node and check its right child node until the first black disk is found. In Fig. \ref{concrete_ideaGE}, we will find node $B$ which has $u_7$ and $u_8$ as its child nodes, and $K_1$ is equal to the number of leaf nodes that node $B$ has, i.e., $K_1=2$. Since $K_1$ always has an exponential form of $K_1=2^{k_1}$, instead of generating Fig. \ref{concrete_ideaGE}, one could also count the number of unfrozen bits from the last bit $u_N$ to $u_1$ until the first frozen bit is reached, the largest number of consecutive unfrozen bits, $2^{k_1}$, will be the desired $K_1$.

\subsection{SC decoding performance after $u_{N-K_1+1}$}
We now present the following important relationship between SC decoding and ML decoding after the unfrozen bit $u_{N-K_1+1}$, which will be used to design the enhanced split-reduced decoding
algorithm.

\newtheorem{theorem}{\textbf{Theorem}}
\begin{theorem}\label{theorem_MLvsSC}
Suppose that the desired $K_1$ has been found, and all the unfrozen bits with indices $\{i: 1 \le i \le N-K_1\}$ have been supplied correctly by a genie, then SC decoder will achieve exactly the same performance as ML decoder.
\end{theorem}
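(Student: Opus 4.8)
The plan is to reduce the global, genie-aided comparison between the SC and ML decoders to a purely local statement about the rate-$1$ subtree formed by the last $K_1$ bits, and then to settle that local statement. By the construction of $K_1$, the leaves $u_{N-K_1+1},\ldots,u_N$ are all unfrozen and constitute a complete binary subtree of depth $k_1$ (node $B$ in Fig.~\ref{concrete_ideaGE}), i.e.\ a rate-$1$ node. I would write $v=u_{N-K_1+1}^N\,G_2^{\otimes k_1}$ for the length-$K_1$ sub-codeword carried at the root of this subtree; since $G_2^{\otimes k_1}$ is invertible over $\mathrm{GF}(2)$, $v$ ranges over all of $\{0,1\}^{K_1}$ as the last $K_1$ bits vary, and decisions on $v$ and on $u_{N-K_1+1}^N$ are in bijection.

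First I would establish the factorization that makes the reduction work. Using $G_2^{\otimes n}=G_2^{\otimes(n-k_1)}\otimes G_2^{\otimes k_1}$ and indexing the information vector by a block pair $(p,q)$, the last $K_1$ bits occupy the terminal block $p=2^{n-k_1}-1$. Because the all-ones row of $G_2^{\otimes(n-k_1)}$ is identically $1$, every codeword coordinate $(r,s)$ receives this block only through $v_s$, so once the genie fixes $u_1^{N-K_1}$ each transmitted symbol equals a fixed offset XORed with a single coordinate $v_s$. Consequently the $N$ channel outputs split into $K_1$ disjoint groups, the $s$-th group depending on $u_{N-K_1+1}^N$ only through $v_s$, and the conditional likelihood factorizes as $\prod_{s}\tilde W_s(\cdot\,|\,v_s)$. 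This is the crux: it shows that genie-aided ML over the last $K_1$ bits is equivalent to coordinate-wise ML, namely $\hat v_s=\mathrm{HD}(\alpha_s)$ where $\mathrm{HD}$ denotes the sign-based hard decision and $\alpha_s$ is the exact log-likelihood ratio of the $s$-th group. These $\alpha_s$ are precisely the LLRs the SC recursion (75),(76) delivers at the root of the subtree, and they are exact because every $g$-combination along the path uses the genie-supplied (hence correct) prior bits. The bit-reversal $B_N$ only relabels the outputs and leaves this factorization intact.

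It then remains to prove the local claim that SC decoding of a rate-$1$ node reproduces the same coordinate-wise hard decision, which I would argue by induction on the subtree depth. The base case is immediate. For the inductive step the left child receives $\alpha_j\boxplus\alpha_{j+2^{k-1}}$; since $\mathrm{sign}(a\boxplus b)=\mathrm{sign}(a)\,\mathrm{sign}(b)$, the induction hypothesis returns its decoded sub-codeword as $\mathrm{HD}(\alpha_j)\oplus\mathrm{HD}(\alpha_{j+2^{k-1}})$. Feeding this decision into the combination $\alpha_{j+2^{k-1}}+(1-2\hat c^{\mathrm{left}}_j)\alpha_j$ and checking the two sign cases shows the right child decodes to $\mathrm{HD}(\alpha_{j+2^{k-1}})$; recombining the two halves yields exactly $\mathrm{HD}(\alpha_j)$ and $\mathrm{HD}(\alpha_{j+2^{k-1}})$ coordinate-wise. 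Hence SC produces $\hat v=\mathrm{HD}(\alpha)$, the same sub-codeword as ML, and therefore the same $\hat u_{N-K_1+1}^N$; the two decoders commit identical errors and their performances coincide.

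I expect the main obstacle to be the factorization step rather than the rate-$1$ induction. The induction is mechanical once the sign identity for $\boxplus$ and the case analysis for the $g$-operation are in hand, but arguing rigorously that fixing the correct prefix collapses the global genie-aided ML problem into $K_1$ independent binary decisions---in particular pinning down that each codeword coordinate sees the terminal block only through a single $v_s$, and carrying out the bit-reversal bookkeeping so that the ``last $K_1$ leaves'' really do form the aligned terminal subtree---is where the care is required.
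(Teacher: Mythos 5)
Your proposal is correct, and while it arrives at the same two-stage conclusion as the paper---genie-aided ML collapses onto the rate-$1$ subtree rooted at node $B$, where coordinate-wise hard decisions coincide with SC---it executes both stages by genuinely different means. For the reduction, the paper first proves Lemma~\ref{lemma_ML} (ML as maximization of the correlation $\sum_{i}(1-2x_i)L_i$) and then telescopes this metric down the right spine of the tree, using $(1-2(a\oplus b))=(1-2a)(1-2b)$ and $L_C[i]=(1-2D_i)L_A[2i-1]+L_A[2i]$ to rewrite the $\mathrm{argmax}$ level by level until only $\sum_{i=1}^{K_1}(1-2B_i)L_B[i]$ remains; you instead factorize the conditional likelihood directly from the Kronecker structure $G_2^{\otimes n}=G_2^{\otimes(n-k_1)}\otimes G_2^{\otimes k_1}$ and the all-ones terminal row of $G_2^{\otimes(n-k_1)}$, obtaining $K_1$ independent binary hypothesis tests without ever invoking the correlation form of ML. Both reductions apply to any memoryless channel and both hinge on the same fact that the $g$-combinations along the all-right path to $B$ use left-sibling partial sums determined by the genie bits, so the node-$B$ LLRs are exact; your route makes the statistical independence of the $K_1$ decisions explicit, while the paper's metric-peeling matches the SC recursion of equation (76) in \cite{firstArikan} more immediately. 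For the second stage, the paper merely asserts that maximizing $\sum_{i}(1-2B_i)L_B[i]$ over the rate-$1$ node ``is equivalent to the one-by-one hard decisions in SC decoding,'' implicitly leaning on the simplified-SC result of \cite{Simplified_SCL_Alamdar_Yazdi}; your induction via $\mathrm{sign}(a\boxplus b)=\mathrm{sign}(a)\,\mathrm{sign}(b)$ and the sign case analysis of the $g$-update actually proves that equivalence, which is a genuine strengthening of the write-up. Two bookkeeping remarks, neither a gap: since $G_N=B_NG_2^{\otimes n}$ applies $B_N$ on the input side, treating it as an output relabeling requires the commutation $B_NG_2^{\otimes n}=G_2^{\otimes n}B_N$ (which holds); and under your natural-order indexing the $K_1$ output groups are strided rather than contiguous blocks, but only their disjointness is needed for the factorization.
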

\begin{IEEEproof}
See Appendix \ref{prooftheorem_MLvsSC}.
\end{IEEEproof}

According to \cite{Scaling_of_polar_codes1_Hassani} and \cite{Scaling_of_polar_codes2_Hassani}, the quality of a subchannel $W_N^{(j)}$ depends heavily on the first few least significant bits of the binary expansion of $j-1$. Now, recalling the process of locating node $B$ in Fig. \ref{concrete_ideaGE}, it is observed that such a node $B$ always corresponds to a subchannel with Bhattacharyya parameter $Z_B=\left(Z(W)\right)^{2^d}$, where $d$ is the depth of node $B$. In general, $Z_B$ should take a rather small value, since $Z(W) \le 1$ and the power exponent $2^d$ grows exponentially, which implies the feasibility of using SC decoding for the unfrozen bits after $u_{N-K_1+1}$ without splitting.

\subsection{The enhanced decoding algorithm}
Based on the above observation, the enhanced split-reduced SCL decoding algorithm can then be summarized as follows:
\begin{algorithm}
\caption{:The Enhanced Split-Reduced SCL Decoder}
\label{alg:A}
\begin{algorithmic}
\STATE{
\begin{enumerate}
\item[Step 1] The initialization is done by starting from the first bit $u_1$;

\item[Step 2] For the $l$-th path and unfrozen bit $u_i$, if (\ref{L(u_i)threshold}) holds, then set $\hat{u}_i$ to be $0$ or $1$ without splitting the decoding path; otherwise, split the decoding path into two paths. $\omega_l[i]$ is updated for each path in the meantime;

\item[Step 3] When the number of paths exceeds the specified list size $L$, prune those paths whose counter is less than the predetermined constant $\omega$; if no path has counter larger than $\omega$, then select the best $L$ paths according to (\ref{metric});

\item[Step 4] If $i<N-K_1$, then increase $i$ to $i=i+1$ and go to Step 2; otherwise, simplified SC decoding is applied instead to obtain a unique estimate $(\hat{u}_{N-K_1+1},...,\hat{u}_N)$ for each surviving path, and thus the candidate codeword with the smallest distance from $y_1^N$ is selected as the decoding output.
\end{enumerate}
}
\end{algorithmic}
\end{algorithm}

Fig. \ref{Intuitive} illustrates the decoding procedure of the enhanced split-reduced SCL decoder. For the unfrozen bits before $u_{N-K_1+1}$, the splitting rule as per (\ref{L(u_i)threshold}) is used, while for the unfrozen bits $(u_{N-K_1+1},...,u_N)$, simplified SC decoding is implemented instead.

It is easy to see that the complexity is further reduced by the enhanced split-reduced SCL decoder, due to the elimination of path-splitting after $u_{N-K_1+1}$, nevertheless, the achievable error performance is not clear. In the following, we show that the enhanced split-reduced SCL decoder outperforms the original version in terms of error rate as well.

\begin{theorem}\label{theorem_enhanced_performance}
The decoding error performance achieved by the enhanced split-reduced SCL decoder is no worse than the original version.
\end{theorem}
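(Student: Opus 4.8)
\emph{Proof proposal.}
The plan is to show that the enhanced decoder coincides with the maximum-likelihood (ML) decoder constrained to a data-dependent set of codewords that already contains every codeword the original decoder could possibly output; optimality of this constrained ML rule then yields the claim. First I would observe that Algorithm~2 is identical to Algorithm~1 through the decoding of $u_{N-K_1}$: Steps~1--3 are verbatim the same, and only the termination test in Step~4 differs. Hence, for any received $y_1^N$, both decoders produce exactly the same list of surviving partial paths at stage $N-K_1$. Denote the resulting set of surviving prefixes $\hat u_1^{N-K_1}$ by $\mathcal{S}=\mathcal{S}(y_1^N)$, and let $\mathcal{T}=\mathcal{T}(y_1^N)=\{c_1^N : \text{the length-}(N-K_1)\text{ prefix of } c_1^N \text{ lies in } \mathcal{S}\}$.

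Next I would use the structure established in Section~IV-A: the bits $u_{N-K_1+1},\dots,u_N$ form a rate-one node (all unfrozen). For such a node the simplified SC rule makes an independent hard decision on each constituent LLR, and it is exactly this rate-one structure exploited in Theorem~\ref{theorem_MLvsSC} that guarantees, for \emph{any} fixed prefix, the return of the minimum-distance—hence ML—completion of that prefix. Consequently, the SC completion produced in Step~4 of Algorithm~2 for each surviving prefix is the closest codeword to $y_1^N$ among all completions of that prefix, and the final minimum-distance selection over these completions returns $\arg\min_{c_1^N\in\mathcal{T}}\|y_1^N-x_1^N\|$, where $x_1^N$ is the BPSK image of $c_1^N$. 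In other words, the enhanced decoder is precisely the ML decoder restricted to $\mathcal{T}$. By contrast, every surviving full path of Algorithm~1 is a completion of some prefix in $\mathcal{S}$, so the original decoder's output always lies in the \emph{same} set $\mathcal{T}$.

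It then remains to invoke optimality of the restricted ML rule. For any deterministic decoder $\delta$ obeying $\delta(y_1^N)\in\mathcal{T}(y_1^N)$, under a uniform prior over the $M$ codewords, the probability of correct decoding is
\begin{equation}\nonumber
\Pr\{\delta(Y_1^N)=C_1^N\}=\frac{1}{M}\int W_N\!\left(y_1^N\,\middle|\,\delta(y_1^N)\right)dy_1^N,
\end{equation}
which is maximized, for every $y_1^N$ separately, by choosing $\delta(y_1^N)=\arg\max_{c_1^N\in\mathcal{T}(y_1^N)}W_N(y_1^N|c_1^N)$, i.e.\ by the restricted ML rule. Since the enhanced decoder realizes this rule whereas the original decoder is merely some admissible $\delta$ with output confined to the same $\mathcal{T}$, we obtain $\Pr\{\text{enhanced error}\}\le\Pr\{\text{original error}\}$, which is the assertion.

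I expect the main obstacle to be the second step: rigorously arguing that SC on the rate-one suffix returns the minimum-distance completion for \emph{every} prefix—not only the genie-correct one treated by Theorem~\ref{theorem_MLvsSC}—and that the minimum-distance selection over the per-prefix SC outputs is genuinely $\arg\min$ over all of $\mathcal{T}$ rather than over a strict subset. Care is also needed to make $\mathcal{T}(y_1^N)$ a well-defined deterministic function of $y_1^N$ that is shared by both decoders, so that the pointwise maximization underlying the optimality step is legitimate and no path-wise anomaly (where the list decoder luckily beats ML on a particular $y_1^N$) can overturn the averaged inequality.
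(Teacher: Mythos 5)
Your proposal is correct and follows essentially the same route as the paper's own proof: both observe that the two decoders coincide through stage $N-K_1$, invoke Theorem~\ref{theorem_MLvsSC} per surviving prefix to conclude that the SC completion is the ML (minimum-distance) completion of that prefix, and conclude that the final minimum-distance selection realizes ML restricted to the completions of the surviving prefixes, a set that also contains every possible output of the original decoder. The two obstacles you flag are resolved exactly as you anticipate: the algebra in the proof of Theorem~\ref{theorem_MLvsSC} never uses correctness of the fixed prefix (the paper applies it per-path with the parenthetical caveat that this is not overall ML), and your explicit averaged-optimality argument for the restricted ML rule simply formalizes the step the paper states directly as ``the overall best estimate of $u_1^N$ must be involved in these $l$ surviving candidate codewords.''
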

\begin{IEEEproof}
See Appendix \ref{prooftheorem_enhanced_performance}.
\end{IEEEproof}


\begin{figure}[!ht]
\centering
\includegraphics[scale=0.48]{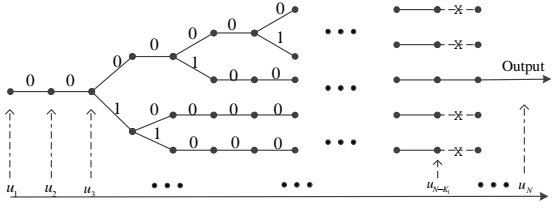}
\caption{Decoding procedure of enhanced split-reduced SCL decoding.}
\label{Intuitive}
\end{figure}

It is also of interest to consider the worst case complexity of the proposed scheme. Since the worst case appears when every path splits, where the LLR threshold and $\omega$ threshold are never achieved, hence the proposed scheme reduces to the original SCL decoder. However, the upper bound $N-K_1+1$ after which SC decoding can be implemented always exists. Therefore, even for the worst case, the proposed algorithm can reduce the complexity by $O(LK_1\mathrm{log}K_1)$ compared with the SCL decoding scheme without degrading performance.

\section{Simulation Results}
In this section, numerical simulation results are presented to illustrate the performance of the proposed decoding algorithms. Since the enhanced split-reduced SCL decoder requires lower complexity, but achieves no worse decoding error performance compared to the simple split-reduced SCL decoder, we consider only the enhanced split-reduced SCL decoder in simulations (we will use ESR-SCL as the shorthand for enhanced split-reduced SCL decoder in the following figures).

\begin{figure}[!ht]
\centering
\includegraphics[scale=0.5]{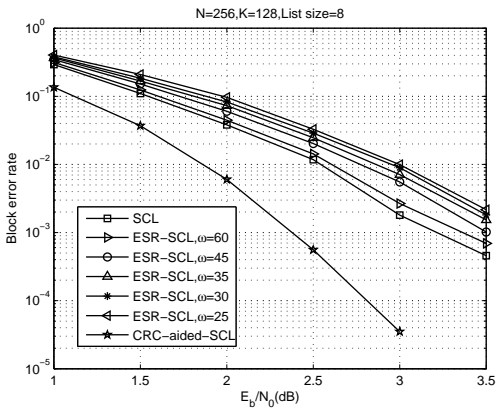}
\caption{Performance comparison of SCL decoder, enhanced split-reduced SCL decoder and CRC-aided SCL decoder.}
\label{per}
\end{figure}

Fig. \ref{per} shows the block error rate of SCL decoder, enhanced split-reduced SCL decoder with different $\omega$ and CRC-aided SCL decoder with generator polynomial $g(D)=D^{24}+D^{23}+D^6+D^5+D+1$ [18], where $N=2^8$, $K=N/2$ and list size $L=8$. As expected, when $\omega$ increases, the performance of enhanced split-reduced SCL decoder improves. In addition, we see that the CRC-aided SCL decoder significantly outperforms the SCL decoder.

\begin{figure}[!ht]
\centering
\includegraphics[scale=0.5]{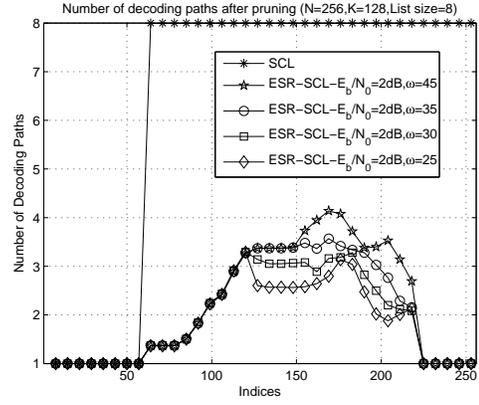}
\caption{Average number of decoding paths after pruning for $u_i$, with $E_b/N_0=2\mathrm{dB}$.}
\label{db2_omega}
\end{figure}

Define $l_i$ as the average number of decoding paths that are split when $u_i$ is processed, and let $T$ be the total number of independent trials; then $l_i\triangleq \frac{\sum_{j=1}^{T}l_{i,j}}{T}$, where $l_{i,j}$ is defined as the number of splitting paths at stage $i$ in the $j$-th experiment. The average number of paths before and after pruning are defined similarly.

\begin{figure}[!ht]
\centering
\includegraphics[scale=0.5]{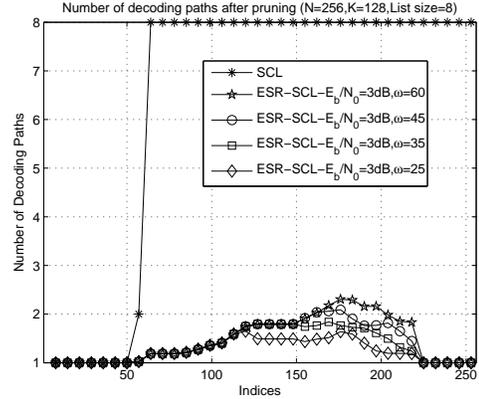}
\caption{Average number of decoding paths after pruning for $u_i$, with $E_b/N_0=3\mathrm{dB}$.}
\label{db3_omega}
\end{figure}

Fig. \ref{db2_omega} illustrates $l_i$ for SCL decoder and enhanced split-reduced SCL decoder with different $\omega$ at SNR$=2\mathrm{dB}$ when $N=2^8$, $K=N/2$ and list size $L=8$, after pruning operation. Since SCL decoder always splits decoding paths for each unfrozen bit, the number of decoding paths increases to the specified list size $L=8$ at an exponential rate, i.e., from 0 to 1, 2, 4, 8, and remains at 8 till termination. On the other hand, for enhanced split-reduced SCL decoder, it can be observed that the average number of paths after pruning operation keeps smaller than $4$ for most indices. As $\omega$ increases, less complexity can be saved, since the decoder has to wait for some longer stages until some path achieves the $\omega$ threshold, and during this period, the number of paths still stays at a large value. Besides, recall that enhanced split-reduced SCL decoder degrades to SC decoding after the index $N-K_1+1$. For this particular case, $K_1=32$, hence $K_1/K=25\%$, which indicates that 25\% of the unfrozen bits can be decoded by SC decoding rather than list decoding.

It has been observed in \cite{list_decoding_Vardy} that SCL decoder with list size $L\ge2$ almost achieves the same performance. Thus, to achieve a better performance, the list size should be at least $L=2$. Fig. \ref{db3_omega} shows the number of paths after pruning with $E_b/N_0=3\mathrm{dB}$. For small $\omega$, the average number of paths remains smaller than $2$, which implies that it can not retain some similar performance as SCL decoder. On the other hand, with large $\omega=60$, the performance significantly improves and becomes closer to that of the SCL decoder, yet with reduced complexity.

Fig. \ref{omega45_db} plots the average number of decoding paths after pruning for different SNRs with $\omega=45$. It can be observed that as SNR increases, the average number of decoding paths after pruning decreases. Since the received symbols are more reliable for high SNRs, the LLR threshold (see (\ref{L(u_i)threshold})) will be achieved with a higher probability, and once the $\omega$ threshold is achieved, the other paths will be eliminated without splitting, as analyzed by using Gaussian approximation. Besides, as the average number of paths after pruning decreases for higher SNRs, it will lead to some performance further deviating from SCL decoding (see Fig. \ref{per}).
\begin{figure}[!ht]
\centering
\includegraphics[scale=0.5]{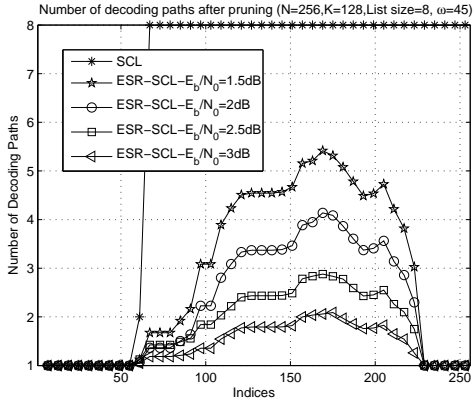}
\caption{Average number of decoding paths after pruning for $u_i$ with different SNRs.}
\label{omega45_db}
\end{figure}

\section{Conclusion}
In this paper, we have proposed low complexity split-reduced SCL decoders for polar codes. By exploiting the fact that splitting can be avoided if the reliability of decoding the unfrozen bit is high enough, a new splitting rule was defined. Under this splitting rule, it was conjectured that, if the correct path survived at some stage, it tends to survive till termination without splitting, while the incorrect path is more likely to split in the following stages. This critical behavior was then used to design a new low complexity SCL decoder. Furthermore, it was explicitly shown that there exists a particular unfrozen bit $u_{N-K_1+1}$ for any polar codes, and SC decoding can be implemented instead to decode the following unfrozen bits without degradation of error performance.

\appendix

\subsection{Proof of Proposition \ref{proposition_mean}} \label{proofproposition_mean}
We first focus on the basic decoding element defined by $G_2$, and consider the case $(u_1=1,u_2=0)$. This leads to $\textbf{E}[L(y_1)]=-\frac{2}{\sigma_n^2}$ and $\textbf{E}[L(y_2)]=\frac{2}{\sigma_n^2}$. Then,
\begin{equation}\nonumber
\begin{aligned}
\phi(\textbf{E}[L(u_1)])&=1-\big(1-\phi(-\frac{2}{\sigma_n^2})\big)\big(1-\phi(\frac{2}{\sigma_n^2})\big)\\
&\overset{(a)}=1+\big(1-\phi(\frac{2}{\sigma_n^2})\big)\big(1-\phi(\frac{2}{\sigma_n^2})\big)\\
&=2-\phi(\textbf{E}[L_0(u_1)])\overset{(b)}=\phi(-\textbf{E}[L_0(u_1)]).
\end{aligned}
\end{equation}
Steps (a) and (b) come from the fact that $\phi(x)+\phi(-x)=2$. Thus, $\textbf{E}[L(u_1)]=-\textbf{E}[L_0(u_1)]$ and $\textbf{E}[L(u_2)]=-(-\frac{2}{\sigma_n^2})+\frac{2}{\sigma_n^2}=\textbf{E}[L_0(u_2)]$. For all other possible values of $(u_1,u_2)$, similar results hold. As polar codes are recursively constructed based on $G_2$, by simple induction, the claim follows.

\subsection{Proof of Proposition \ref{lemmaP}}\label{prooflemmaP}
Denote $t=Q^{-1}(P_e(u_i))$, and thus from (\ref{perror}) we have $P'_e(u_i)=Q(t+\frac{\mathrm{log}(1/Q(t)-1)}{2t})$. For the following derivation, we will use $\overset{(L)}{=}$ to denote the L'H\^{o}pital's rule. Then, we have
\begin{equation}\nonumber
\begin{aligned}
&\lim_{P_e(u_i)\to 0^{+}}\frac{P'_e(u_i)}{P_e(u_i)}\\&=\lim_{t\to +\infty}\frac{Q(t+\frac{\mathrm{log}(1/Q(t)-1)}{2t})}{Q(t)}\\
&\overset{(L)}{=}\lim_{t\to +\infty}\frac{e^{-\frac{(t+\frac{\mathrm{log}(1/Q(t)-1)}{2t})^2}{2}}}{e^{-\frac{t^2}{2}}}\\
&=e^{-\frac{1}{2}\lim_{t\to +\infty}((\frac{\mathrm{log}(1/Q(t)-1)}{2t})^2+\mathrm{log}(1/Q(t)-1))}.
\end{aligned}
\end{equation}

One can also check that
\begin{equation}\nonumber
\begin{aligned}
\lim_{t\to +\infty}\frac{\mathrm{log}(1/Q(t)-1)}{2t}&\overset{(L)}{=}\lim_{t\to +\infty}\frac{1}{2}\frac{1}{1-Q(t)}\frac{1}{\sqrt{2\pi}}\frac{e^{-\frac{t^2}{2}}}{Q(t)}\\
&\overset{(L)}{=}\lim_{t\to +\infty}\frac{1}{2}\frac{t}{1-Q(t)}=+\infty,
\end{aligned}
\end{equation}
and
\begin{equation}\nonumber
\lim_{t\to +\infty}\mathrm{log}(1/Q(t)-1)=+\infty.
\end{equation}
Thus $\lim_{P_e(u_i)\to 0^{+}}\frac{P'_e(u_i)}{P_e(u_i)}=0$ holds.

Next, recall $P_e(u_i)=Q(\sqrt{\textbf{E}[L_0(u_i)]/2})$, i.e., $t=\sqrt{\textbf{E}[L_0(u_i)]/2}=\sigma/2$. Then we have
\begin{equation}\nonumber
\begin{aligned}
&\lim_{P_e(u_i)\to 0^{+}}P'_r(u_i)\\&=\lim_{P_e(u_i)\to 0^{+}}Q\big(\frac{\mathrm{log}(1-P_e(u_i))-\mathrm{log}P_e(u_i)-\mu}{\sigma}\big)\\
&=Q\big(\lim_{t\to +\infty}t\cdot (\frac{\mathrm{log}(1/Q(t)-1)}{2t^2}-1)\big).
\end{aligned}
\end{equation}
Note that
\begin{equation}\nonumber
\begin{aligned}
\lim_{t\to +\infty}\frac{\mathrm{log}(1/Q(t)-1)}{2t^2}&\overset{(L)}{=}\lim_{t\to +\infty}\frac{1}{4}\frac{1}{1-Q(t)}\frac{1}{\sqrt{2\pi}}e^{-\frac{t^2}{2}}\frac{1/t}{Q(t)}\\
&\overset{(L)}{=}\lim_{t\to +\infty}\frac{1}{4}\frac{1}{1-Q(t)}\frac{1}{t^2}=0,
\end{aligned}
\end{equation}
thus $\lim_{t\to +\infty}t(\frac{\mathrm{log}(1/Q(t)-1)}{2t^2}-1)=-\infty$, and we have $\lim_{P_e(u_i)\to 0^{+}}P'_r(u_i)=1$.

\subsection{Empirical evidence of Conjecture 1}\label{proofproposition_correct}
The first two statements are straightforward results due to Proposition \ref{lemmaP}. According to \cite{SCL_decoding_Chenkai}, the metric for each path could be computed in a recursive manner according to
\begin{equation}\label{metric}
\begin{aligned}
P(\hat{u}_1^{i}|y_1^N)=P(\hat{u}_1^{i-1}|y_1^N)\frac{e^{(1-\hat{u}_i)L(u_i)}}{e^{L(u_i)}+1}.
\end{aligned}
\end{equation}
For the correct path,
\begin{equation}\nonumber
\begin{aligned}
P(\hat{u}_1^{i}|y_1^N)&=P(\hat{u}_1^{i-1}|y_1^N)\frac{e^{L(u_i)}}{e^{L(u_i)}+1}\\
&\approx P(\hat{u}_1^{i-1}|y_1^N)(1-P_e(u_i))\\
&\approx P(\hat{u}_1^{i-1}|y_1^N),
\end{aligned}
\end{equation}
which implies that the reliability of this path after choosing $\hat{u}_i=0$ hardly degrades. Thus, as the correct path survives at $u_{i-1}$, it would not be pruned and continue to survive at $u_i$ with high probability. By induction on $i$, the correct path would survive to the last without splitting if the following subchannels are reliable enough.

\subsection{Empirical evidence of Conjecture 2}\label{proofproposition_incorrect}
\begin{figure}[!ht]
\centering
\includegraphics[scale=0.8]{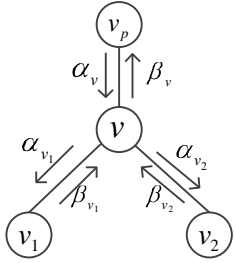}
\centering
\caption{Decoder for the constituent code.}
\label{proof}
\end{figure}
The SC decoding process can be interpreted based on a full binary tree with $N=2^n$ leaf nodes, where postorder traversal is implemented. We use Fig. \ref{proof} to give a simple illustration, where $v_1$ and $v_2$ denote the child nodes of node $v$ while $v_p$ denotes the parent node. When node $v$ is activated, it would first receive an LLR vector $\alpha_v$ from $v_p$. Suppose that the length of $\alpha_v$ is $2^p$. Then, node $v$ would compute the LLR vector $\alpha_{v_1}$ of length $2^{p-1}$ according to SC decoding and passes $\alpha_{v_1}$ to node $v_1$. After node $v_1$ produces its own codeword $\beta_{v_1}$ of length $2^{p-1}$ and passes it back to node $v$, another LLR vector $\alpha_{v_2}$ of length $2^{p-1}$ would be computed at node $v$ and sent to node $v_2$. After node $v$ receives codeword $\beta_{v_2}$, it would produce its own codeword $\beta_v$ by associating $\beta_{v_1}$ and $\beta_{v_2}$ according to $G_2$. The above description defines a recursive algorithm. The initialization is done by assigning the LLRs received from the underlying channel to the root node, while the recursion returns at each leaf node since leaf nodes correspond to the sequence $u_1^N$ and hard decisions are implemented.

Suppose that $m$ errors occur at node $v_1$, i.e., $m$ bits are set to $1$ in $\beta_{v_1}$ (assuming the all-zero codeword transmitted). With a slight abuse of notation, we use $\alpha_v[i]$ to denote the component with index $i$, and we have $\textbf{E}[\alpha_{v_2}[i]]=(1-2\beta_{v_1}[i])\textbf{E}[\alpha_{v}[2i]]+\textbf{E}[\alpha_v[2i+1]]$. Note that $\textbf{E}[\alpha_v[i]]=\textbf{E}[\alpha_v[j]]$ holds for any $1 \le i, j \le 2^p$, and thus $\textbf{E}[\alpha_{v_2}[i]]=0$ if $\beta_{v_1}[i]=1$. In (\ref{ELu}), one can check that if only $\textbf{E}[L(y_1)]$ (or $\textbf{E}[L(y_2)]$) is zero, we have $\textbf{E}[L(u_1)]=0$ and $\textbf{E}[L(u_2)]=\textbf{E}[L(y_2)]$ (or $\textbf{E}[L(u_2)]=\textbf{E}[L(y_1)]$); while if both $\textbf{E}[L(y_1)]=0$ and $\textbf{E}[L(y_2)]=0$ hold, we have $\textbf{E}[L(u_1)]=\textbf{E}[L(u_2)]=0$. Thus, the number of LLRs whose means are zero-valued remains the same after the calculation defined by (\ref{ELu}).

As node $v_2$ would pass another two LLR vectors computed according to (\ref{ELu}) to its left child node and right child node respectively, by some simple induction, we can conclude that there would be at least $m$ leaf nodes that have zero-valued means. For an unfrozen bit $u_i$, $\textbf{E}[L(u_i)]=0$ implies a significant degradation to the original subchannel, and it is more difficult to achieve the thresholds $\pm\mathrm{log}\frac{1-P_e(u_i)}{P_e(u_i)}$ ($|\mathrm{log}\frac{1-P_e(u_i)}{P_e(u_i)}|$ usually stays far away from zero if $u_i$ is an unfrozen bit). As there would be at least $m$ leaf nodes having zero-valued means, this incorrect path is quite likely to split at the following stages.

\subsection{Proof of Theorem \ref{theorem_MLvsSC}}\label{prooftheorem_MLvsSC}
We first provide a lemma, which will be invoked in the proof of Theorem \ref{theorem_MLvsSC}.
\begin{lemma}\label{lemma_ML}
For a symmetric B-DMC with received LLRs $L_1^N$, the ML decoder will output the codeword
\begin{equation}
\hat{x}_1^N=\mathop{\mathrm{argmax}}_{x_1^N \in \mathcal{C}}\sum_{i=1}^N(1-2x_i)L_i.
\label{mlequation}
\end{equation}
\end{lemma}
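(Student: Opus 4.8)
The plan is to start directly from the definition of maximum-likelihood decoding: the ML decoder selects the codeword maximizing the channel likelihood, so that $\hat{x}_1^N = \mathop{\mathrm{argmax}}_{x_1^N \in \mathcal{C}} W_N(y_1^N \mid x_1^N)$. Since the channel is memoryless, this likelihood factorizes as $W_N(y_1^N\mid x_1^N)=\prod_{i=1}^N W(y_i\mid x_i)$, and because the logarithm is strictly increasing the $\mathrm{argmax}$ is preserved under taking $\log$. Hence it suffices to show that maximizing $\sum_{i=1}^N \log W(y_i\mid x_i)$ over $\mathcal{C}$ is equivalent to maximizing $\sum_{i=1}^N (1-2x_i)L_i$.

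The key step is an exact per-coordinate decomposition of each log-likelihood term into a codeword-independent part plus a part carrying all the dependence on $x_i$. Using $1-2x_i=+1$ when $x_i=0$ and $1-2x_i=-1$ when $x_i=1$, one verifies the identity
\begin{equation}\nonumber
\log W(y_i\mid x_i)=\tfrac{1}{2}\big(\log W(y_i\mid 0)+\log W(y_i\mid 1)\big)+\tfrac{1}{2}(1-2x_i)L_i,
\end{equation}
where $L_i=\log\frac{W(y_i\mid 0)}{W(y_i\mid 1)}$ is precisely the received channel LLR; checking both branches $x_i\in\{0,1\}$ recovers $\log W(y_i\mid 0)$ and $\log W(y_i\mid 1)$ respectively. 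Summing over $i$ then gives
\begin{equation}\nonumber
\sum_{i=1}^N \log W(y_i\mid x_i)=\underbrace{\tfrac{1}{2}\sum_{i=1}^N\big(\log W(y_i\mid 0)+\log W(y_i\mid 1)\big)}_{\text{independent of } x_1^N}+\tfrac{1}{2}\sum_{i=1}^N (1-2x_i)L_i.
\end{equation}

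The first sum does not depend on the transmitted codeword, so it plays no role in the maximization and can be discarded; the positive constant $\tfrac{1}{2}$ multiplying the second sum likewise does not affect the $\mathrm{argmax}$. This immediately yields $\hat{x}_1^N=\mathop{\mathrm{argmax}}_{x_1^N\in\mathcal{C}}\sum_{i=1}^N(1-2x_i)L_i$, as claimed. I expect the only delicate point to be the bookkeeping in the per-coordinate identity, namely verifying both branches and confirming that the quantity multiplying $(1-2x_i)$ is exactly the channel LLR rather than any polarized quantity. Note that symmetry of the B-DMC is not strictly required for this rearrangement, which holds for any binary-input memoryless channel; it is assumed here only to remain consistent with the symmetric-channel setting and the all-zero-codeword reductions used elsewhere in the paper. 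No analytic estimates are needed, as the lemma is essentially a rewriting of the log-likelihood into correlation form.
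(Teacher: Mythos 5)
Your proof is correct and follows essentially the same route as the paper's: both reduce the ML criterion to the per-coordinate log-likelihood, strip off codeword-independent terms, and recover the correlation metric $\sum_{i=1}^N(1-2x_i)L_i$. The only cosmetic difference is that you use a single symmetric identity $\log W(y_i\mid x_i)=\tfrac{1}{2}\bigl(\log W(y_i\mid 0)+\log W(y_i\mid 1)\bigr)+\tfrac{1}{2}(1-2x_i)L_i$, whereas the paper first normalizes by $\log P(y_i\mid 1)$ to get $(1-x_i)L_i$ and then splits off the constant $\tfrac{1}{2}\sum_i L_i$ --- algebraically the same decomposition, and your observation that symmetry of the channel is not actually needed is also accurate.
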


\begin{IEEEproof}
\begin{equation}\nonumber
\begin{aligned}
\hat{x}_1^N&=\mathop{\mathrm{argmax}}_{x_1^N \in \mathcal{C}}P(x_1^N|y_1^N)\\
&=\mathop{\mathrm{argmax}}_{x_1^N \in \mathcal{C}}\mathrm{log}P(y_1^N|x_1^N)\\
&=\mathop{\mathrm{argmax}}_{x_1^N \in \mathcal{C}}\sum_{i=1}^N\mathrm{log}P(y_i|x_i).
\end{aligned}
\end{equation}
As $y_1^N$ denotes the symbols received from the underlying channel, $\sum_{i=1}^N\mathrm{log}P(y_i|1)$ is just a constant which is independent of $x_1^N$. Thus,
\begin{equation}\nonumber
\begin{aligned}
\hat{x}_1^N &=\mathop{\mathrm{argmax}}_{x_1^N \in \mathcal{C}}\sum_{i=1}^N\mathrm{log}P(y_i|x_i)-\sum_{i=1}^N\mathrm{log}P(y_i|1)\\
&=\mathop{\mathrm{argmax}}_{x_1^N \in \mathcal{C}}\sum_{i=1}^N\mathrm{log}\frac{P(y_i|x_i)}{P(y_i|1)}\\
&=\mathop{\mathrm{argmax}}_{x_1^N \in \mathcal{C}}\sum_{i=1}^N(1-x_i)L_i\\
&=\mathop{\mathrm{argmax}}_{x_1^N \in \mathcal{C}}\big(\frac{1}{2}\sum_{i=1}^NL_i+\frac{1}{2}\sum_{i=1}^N(1-2x_i)L_i\big).
\end{aligned}
\end{equation}
Note that $\frac{1}{2}\sum_{i=1}^NL_i$ is also a constant once $y_1^N$ is determined. Thus,
\begin{equation}\nonumber
\begin{aligned}
\hat{x}_1^N=\mathop{\mathrm{argmax}}_{x_1^N \in \mathcal{C}}P(x_1^N|y_1^N)=\mathop{\mathrm{argmax}}_{x_1^N \in \mathcal{C}}\sum_{i=1}^N(1-2x_i)L_i.
\end{aligned}
\end{equation}
\end{IEEEproof}

Now we establish a full binary tree as illustrated in Fig. \ref{concrete_ideaGE} for the proof. Use $(v_1,v_2,...,v_m)$ to denote the codeword associated with a node $v$ and $(L_v[1],L_v[2],...,L_v[m])$ to denote the related LLRs. The codeword $(A_1,A_2,...,A_N)$ of root node $A$ just corresponds to $x_1^N$, the last stage output of the encoder, while $(L_A[1],L_A[2],...,L_A[N])$ represents the received LLRs from the underlying channel. Let $D$ and $C$ be the left and right child node of the root node $A$, respectively. Then considering the basic encoding operation with the matrix $G_2$ we have $A_{2i-1}=D_i\oplus C_i$ and $A_{2i}=C_i$, for $i=1,2,...,N/2$. As $u_1^{N-K_1}$ are correctly known, the ML decoder will select the estimate $(\hat{u}_{N-K_1+1},...,\hat{u}_N)$ to maximize (see Lemma \ref{lemma_ML}):
\begin{equation}\nonumber
\begin{aligned}
&\hat{u}_{N-K_1+1}^N=\mathop{\mathrm{argmax}}_{u_{N-K_1+1}^N}\sum_{i=1}^N(1-2A_i)L_A[i] \\
&=\mathop{\mathrm{argmax}}_{u_{N-K_1+1}^N}\big(\sum_{i=1}^{N/2}(1-2(D_i \oplus C_i))L_A[2i-1]\\
&+\sum_{i=1}^{N/2}(1-2C_i)L_A[2i]\big)\\
&=\mathop{\mathrm{argmax}}_{u_{N-K_1+1}^N}\sum_{i=1}^{N/2}(1-2C_i)\big((1-2D_i)L_A[2i-1]+L_A[2i]\big).
\end{aligned}
\end{equation}
The SC decoder calculates the LLRs at node $C$ according to equation (76) in \cite{firstArikan}, and one can check that $L_C[i]=(1-2D_i)L_A[2i-1]+L_A[2i]$, which is known since $L_A[i]$ is the received LLR and $D_i$ only depends on $u_1^{N-K_1}$. Thus we have
\begin{equation}\nonumber
\hat{u}_{N-K_1+1}^N =\mathop{\mathrm{argmax}}_{u_{N-K_1+1}^N}\sum_{i=1}^{N/2}(1-2C_i)L_C[i].
\end{equation}

Then we consider the child nodes of node $C$ and repeat the above steps, until the first black node $B$ is reached. Similarly, we can have
\begin{equation}
\begin{aligned}
\hat{u}_{N-K_1+1}^N =\mathop{\mathrm{argmax}}_{u_{N-K_1+1}^N}\sum_{i=1}^{K_1}(1-2B_i)L_B[i],
\label{LLR_metric_for ML_decoding}
\end{aligned}
\end{equation}
where $L_B[i]$ equals the LLR calculated by the SC decoder according to equation (76) in \cite{firstArikan}. Obviously, to maximize the summation in (\ref{LLR_metric_for ML_decoding}), it requires that the binary codeword of $(B_1,B_2,...,B_{K_1})$ are decided according to the signs of $(L_B[1], L_B[2], ..., L_B[K_1])$, which are just equivalent to the one-by-one hard decisions in SC decoding, except that an inverse encoding operation is needed to obtain the desired $\hat{u}_{N-K_1+1}^N$. Therefore, SC decoder achieves exactly the same performance as ML decoder provided the real values of $u_1^{N-K_1}$ are known.

\subsection{Proof of Theorem \ref{theorem_enhanced_performance}}\label{prooftheorem_enhanced_performance}
It is obvious that before $u_{N-K_1+1}$ is processed, the enhanced split-reduced SCL decoder achieves exactly the same performance as the original one. Suppose that $l$ paths survive when $u_{N-K_1+1}$ is reached. For each surviving path, there should be $2^{K_1}$ possible paths which all originate from the nodes at the ${(N-K_1)}$-th level (just corresponds to the ${(N-K_1)}$-th bit, see \cite{list_decoding_Vardy}) in the list decoding framework. According to Theorem \ref{theorem_MLvsSC}, for any particular path, the conventional SC decoding suffices to achieve the ML decoding performance (note that this is not the overall ML decoding performance since the estimated unfrozen bits before $u_{N-K_1+1}$ are not guaranteed to be correct). Thus, for each particular path arriving at $u_{N-K_1+1}$, the conventional SC decoding algorithm would select the best path among all $2^{K_1}$ possible ones, i.e., the best estimate $(\hat{u}_{N-K_1+1},...,\hat{u}_N)$ for each surviving path can be obtained directly. Thus, the overall best estimate of $u_1^N$ must be involved in these $l$ surviving candidate codewords. Finally, the candidate codeword that has the smallest distance from the received symbols $y_1^N$ is selected as the decoding output.

\section*{Acknowledgement}
The authors sincerely thank the Guest Editor and the anonymous reviewers for their constructive suggestions which helped us to improve the manuscript.

\bibliographystyle{IEEEtran}

\end{document}